\title{Non-Gaussian Quasi Maximum Likelihood Estimation of GARCH Models}
\author{Lei Qi\footnote { \small{Lei Qi and Dacheng Xiu are PhD candidates, Jianqing Fan is Frederick L. Moore'18 Professor of Finance at
Bendheim Center for Finance, Princeton University, Princeton, NJ 08544. (Email: lqi@princeton.edu, dachengx@princeton.edu, jqfan@princeton.edu) This research was supported in part by NSF Grant DMS-0704337 and DMS-0714554.}}, Dacheng Xiu and Jianqing Fan}
\newtheorem{Theorem}{Theorem}
\newtheorem{Assumption}{Assumption}
\newtheorem{Remark}{Remark}
\newtheorem{Lemma}{Lemma}
\newtheorem{Proposition}{Proposition}
\newcommand{\BSigma}{\boldsymbol{\Sigma}}
\newcommand{\Btheta}{\boldsymbol{\theta}}
\newcommand{\Bgamma}{\boldsymbol{\gamma}}
\newcommand{\BW}{\boldsymbol{W}}
\newcommand{\BT}{\boldsymbol{T}}
\newcommand{\ve}{\varepsilon}
\newcommand{\By}{\boldsymbol{y}}
\newcommand{\BV}{\boldsymbol{V}}
\newcommand{\BG}{\boldsymbol{G}}
\newcommand{\BXi}{\boldsymbol{\Xi}}
\begin{document}

\maketitle

\begin{abstract}
The non-Gaussian quasi maximum likelihood estimator is frequently used in GARCH models with intension to improve the efficiency of the GARCH parameters.  However, the method is usually inconsistent unless the quasi-likelihood happens to be the true one.  We identify an unknown scale parameter that is critical to the consistent estimation of non-Gaussian QMLE.  As a part of estimating this unknown parameter, a two-step non-Gaussian QMLE (2SNG-QMLE) is proposed for estimation the GARCH parameters. Without assumptions on symmetry and unimodality of the distributions of innovations, we show that the non-Gaussian QMLE remains consistent and asymptotically normal, under a general framework of non-Gaussian QMLE.  Moreover, it has higher efficiency than the Gaussian QMLE, particularly when the innovation error has heavy tails.  Two extensions are proposed to further improve the efficiency of 2SNG-QMLE.  The impact of relative heaviness of tails of the innovation and quasi-likelihood distributions on the asymptotic efficiency has been thoroughly investigated.  Monte Carlo simulations and an empirical study confirm the advantages of the proposed approach.
\end{abstract}


\section{Introduction}

Volatility has been a crucial ingredient in modeling financial time series and designing risk management and trading strategies. It is often observed that volatilities tend to cluster together. This characteristic of financial data suggests that volatilities are autocorrelated and changing over time. \cite{Engle1982} proposed ARCH (autoregressive conditional heteroscedasticity) to model volatility dynamics by taking weighted averages of past squared forecast errors. This seminal idea led to a great richness and variety of volatility models. Among numerous generalizations and developments, GARCH model by \cite{Bollerslev1986} has been commonly used:
\begin{eqnarray}
&&x_t= v_t \varepsilon_t\\
&&v^2_t= c +\sum_{i=1}^p \tilde a_i x_{t-i}^2+\sum_{j=1}^q \tilde b_j v_{t-j}^2
\end{eqnarray}
In this GARCH$(p,q)$ model, variance forecast takes weighted average of not only past square errors but also historical variances. The simplicity and intuitive appeal make GARCH model, especially GARCH$(1,1)$, a workhorse and good start point in many financial applications.

Earlier literature on inference from ARCH/GARCH models is based on the maximum likelihood estimation (MLE) with conditional Gaussian assumption on the innovations. Plenty of empirical evidence, however, has documented heavy-tailed and asymmetric innovation distributions of $\varepsilon_t$, rendering this assumption unjustified, see for instance \cite{Diebold1988}. Consequently, MLE using Student's $t$ or generalized Gaussian likelihood functions has been introduced, see e.g. \cite{EngleBollerslev1986}, \cite{Bollerslev1987},  \cite{Hsieh1989}, and \cite{Nelson1991}. However, these methods may lead to inconsistent estimates if the distribution of the innovation is misspecified. Alternatively, the Gaussian MLE, regarded as a quasi maximum likelihood estimator (QMLE) may be consistent, see e.g. \cite{Elie_Jeantheau_1995}, and asymptotically normal, provided that the innovation has a finite fourth moment, even if it is far from Gaussian, see \cite{HY2003} and \cite{BHK_2003}. The asymptotic theory dates back to as early as \cite{Weiss_1986} for ARCH models, \cite{Lee_Hansen_1994} and \cite{Lumsdaine_1996} for GARCH$(1,1)$ with stronger conditions, and \cite{BW1992} for GARCH$(p,q)$ under high level assumptions.

Nevertheless, gain in robustness comes with efficiency loss. Theoretically, the divergence of Gaussian likelihood from the true innovation density may considerably increase the variance of the estimates, which thereby fails to reach the Cram\'{e}r-Rao bound by a wide margin, reflecting the cost of not knowing the true innovation distribution. \cite{Engle_Gonzalez-Rivera1991} has suggested a semiparametric procedure that can improve the efficiency of the parameter estimates up to 50\% over the QMLE based on their Monte Carlo simulations, but still incapable of capturing the total potential gain in efficiency, see also \cite{Linton1993}. \cite{DrostKlaassen1997} has put forward an adaptive two-step semiparametric procedure based on a re-parametrization of the GARCH$(1,1)$ model with unknown but symmetric error. \cite{GonzlezRivera_Drost1999} has compared its efficiency gain/loss over Gaussian QMLE and MLE. All the effort would become void if the innovation fails to have a finite fourth moment. \cite{HY2003} has considered the Gaussian QMLE and shown that it would converge to stable distributions asymptotically rather than a normal distribution.

The empirical reason of Gaussian QMLE's efficiency loss is that financial data are generally heavy tail distributed. The conditional normality assumption is violated. For example, \cite{BW1992} reported that sample kurtosis of estimated residuals of Gaussian QMLE on S\&P500 monthly data is 4.6, well exceeding the Gaussian kurtosis which is 3. It is therefore intuitively appealing to develop QMLE based on non-Gaussian likelihoods, especially heavy tailed likelihoods. And the efficiency loss of Gaussian QMLE can be greatly reduced by replacing the likelihoods with heavy tailed ones.

In contrast with the majority of literature focusing on Gaussian QMLE for inference, there is rather limited attention on inference using non-Gaussian QMLE. This may be partly due to the fact that the Gaussian QMLE is robust against misspecification of error distribution, while directly using non-Gaussian QMLE is not. In general a non-Gaussian QMLE does not yield consistent estimation when true error distribution deviates from the likelihood. Moreover, this inconsistency could not be corrected even as we allow to estimate a shape parameter indexing the non-Gaussian likelihood family together with model parameters unless the true innovation density is a member of this likelihood family. Otherwise, estimating shape along with model parameters simply picks one likelihood that is ``least" biased, however the bias persists. \cite{Newey1997} have considered the identification of the non-Gaussian QMLE for heteroscedastic parameters in general conditional heteroscedastic models. They have also pointed out that the scale parameter may not be identified as its true value since it is no longer a natural scale parameter for non-Gaussian densities.

 A valid remedy served for non-Gaussian QMLE would be manipulating model assumptions in order to maintain consistent estimation. For example, the true innovation density is sometimes taken to be Student's $t$ or generalized Gaussian for granted. Alternatively, \cite{Berkes2004} has shown that with a different moment condition on the true innovations instead of the original $E(\varepsilon^2)=1$, a corresponding non-Gaussian QMLE would obtain consistency and asymptotic normality. However, this moment condition $E(\varepsilon^2)=1$ is an essential assumption which enables $v_t$ to bear the natural interpretation of the conditional standard deviation, the notion of volatility. More importantly, moment condition is part of model specification, and it should be prior to and independent of the choice of likelihood. Changing the moment condition would not solve the robustness issue of non-Gaussian QMLE; it simply renders consistency to the correct combination of moment condition and non-Gaussian likelihood, which cannot be determined without knowing the true innovation.

 Therefore, we prefer a non-Gaussian QMLE method which is robust against error misspecification, more efficient than Gaussian QMLE, independent of model assumptions, and yet practical. Such method can also well extend the usage of non-Gaussian QMLE in GARCH software packages. Current packages do include choice of likelihood as an option, for example, Student's $t$ and generalized Gaussian. In addition the shape parameter can be specified or estimated. But as discussed before, such method is not robust against error misspecification. When running estimation, one chooses a particular likelihood family with the hope that true innovation distribution falls into such family, but typically it does not.

The main contribution of this paper is that we propose a novel two step non-Gaussian QMLE method, 2SNG-QMLE for short, which meets the desired properties. The key is the estimation of a scale adjustment parameter, denoted as $\eta_f$, for non-Gaussian likelihood to ensure the consistency of non-Gaussian QMLE under any error distributions. $\eta_f$ is estimated through Gaussian QMLE in the first step; then we feed the estimated $\eta_f$ into non-Gaussian QMLE in the second step. In Gaussian QMLE $\eta_f$ is held constant at unity, and partly because of that this quantity has been overlooked; but in non-Gaussian QMLE $\eta_f$ is no longer constant, and how much it deviates from unity measures how much asymptotic bias would incur by simply using non-Gaussian QMLE without such adjustment.

The second contribution is that we adopt a re-parameterized GARCH model (see also \cite{Newey1997} and \cite{Drost_Klaassen1997}) which separates the volatility scale parameter from heteroscedastic parameters. Under this new parametrization, we derive asymptotic behaviors for 2SNG-QMLE. The results show that 2SNG-QMLE is more efficient than Gaussian-QMLE under various innovation settings, and furthermore there is a clear cut on asymptotic behaviors under the new parametrization. The heteroscedastic parameters we can always achieve $T^{1\over2}$ asymptotic normality, whereas Gaussian QMLE has slower convergence rate when error does not have fourth moment.

The outline of the paper is as follows. Section 2 introduces the model and assumptions. Section 3 discusses the estimation procedure and derives the asymptotic results for 2SNG-QMLE. Section 4 proposes two extensions to further improve efficiency. Section 5 employs Monte Carlo simulations to verify the theoretic results. Section 6 conducts real data analysis on stock returns. Section 7 concludes. The appendix provides all the mathematical proofs.

\section{The Model and Assumptions}
The re-parameterized GARCH$(p, q)$ model takes on the following parametric form:
\begin{eqnarray}
&&x_t=\sigma v_t \varepsilon_t\\
&&v^2_t=1+\sum_{i=1}^p a_i x_{t-i}^2+\sum_{j=1}^q b_j v_{t-j}^2
\end{eqnarray}
The model parameters are summarized in $\boldsymbol{\theta} = \{\sigma, \boldsymbol{\gamma}'\}'$, where $\sigma$ is the scale parameter and $\boldsymbol{\gamma}=(\boldsymbol{a}', \boldsymbol{b}')'$ is the autoregression parameter. The true parameter $\boldsymbol{\theta_0}$ is in the interior of $\Theta$, which is a compact subset of the $\boldsymbol{R}_+^{1+p+q}$, satisfying $\sigma>0$, $a_i\geq 0$, $b_j\geq 0$. We use subscript 0 to denote the value under the true model throughout the paper. The innovation $\{\varepsilon_t\}_{t=1,\ldots,T}$ are i.i.d random variables with mean 0, variance 1 and unknown density $g(\cdot)$. In addition, we assume that the GARCH process $\{x_t\}$ is strictly stationary and ergodic. The elementary conditions for the stationarity and ergodicity of GARCH models have been discussed in \cite{Bougerol_Picard_1992}.

We consider a parametric family of quasi likelihood $\{\eta: \frac{1}{\eta}f(\frac{\cdot}{\eta})\}$ indexed by $\eta>0$, for any given likelihood function $f$. Unlike a shape parameter that is often included in a Student's $t$ likelihood function, $\eta$ is a scale parameter selected to reflect the penalty of model misspecification. More precisely, a specific quasi likelihood scaled by $\eta_f$ will be used in the estimation procedure. The parameter $\eta_f$ minimizes the discrepancy between the true innovation density $g$ and an unscaled misspecified quasi likelihood in the sense of Kullback Leibler Information Distance, see e.g. White(1982). Or equivalently,
\begin{eqnarray}\label{eq:eta_likelihood}\eta_f=\mbox{argmax}_{\eta>0}E\big\{-\log \eta+\log
f(\frac{\varepsilon}{\eta})\big\}\end{eqnarray}
where the expectation is taken under the true model $g$.

Note that $\eta_f$ here only depends on the divergence of the two densities under consideration, rendering it a universal measure of closeness irrelevant of the GARCH model. Once $\eta_f$ is given, the QMLE $\hat{\boldsymbol{\theta}}$ is defined by maximizing the following quasi likelihood with this model parameter $\eta_f$:
\begin{eqnarray}\label{eq:likelihood}
L_T(\boldsymbol{\theta})=\frac{1}{T}\sum_{t=1}^T
l_t(\boldsymbol{\theta})=\frac{1}{T}\sum_{t=1}^T\big(-\log(\sigma v_t)+\log
f(\frac{x_t}{\eta_f\sigma v_t})\big)
\end{eqnarray}

Apparently, the likelihood function differs from a regular one with the additional model parameter $\eta_f$. In fact, our approach is a generalization of the Gaussian QMLE and the MLE as illustrated in the next proposition.
\begin{Proposition}\label{cor:example} If $f\propto \exp(-{x^2}/{2})$ or $f=g$, then $\eta_f=1$.
\end{Proposition}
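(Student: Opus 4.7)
The plan is to substitute each choice of $f$ into the defining optimization (\ref{eq:eta_likelihood}) and verify directly that $\eta=1$ is the unique maximizer of the expected log quasi-likelihood.

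For the Gaussian case $f(x)\propto\exp(-x^2/2)$, I would expand
\[
h(\eta)\ :=\ E\{-\log\eta+\log f(\varepsilon/\eta)\}\ =\ -\log\eta-\frac{1}{2\eta^2}E\varepsilon^2+\text{const},
\]
and use the model moment condition $E\varepsilon^2=1$ to reduce this to $-\log\eta-1/(2\eta^2)+\text{const}$. Setting $h'(\eta)=-1/\eta+1/\eta^3=0$ yields $\eta=1$ as the unique positive root, and $h''(1)=-2<0$. Combined with the boundary behavior $h(\eta)\to-\infty$ as $\eta\to 0^+$ and as $\eta\to\infty$, this identifies $\eta=1$ as the global maximum.

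For the case $f=g$, the key observation is that $g_\eta(x):=\frac{1}{\eta}g(x/\eta)$ is itself a probability density, namely the density of $\eta\varepsilon$. Rewriting the objective as
\[
h(\eta)\ =\ \int g(x)\log g_\eta(x)\,dx\ =\ -H(g)-D_{\text{KL}}(g\,\|\,g_\eta),
\]
where $H(g)$ denotes the differential entropy of $g$, I would invoke Gibbs' inequality: the KL divergence is non-negative and vanishes if and only if $g_\eta=g$ almost everywhere. Setting $\eta=1$ yields $g_\eta=g$, so the maximum is attained there. For uniqueness, if $g_\eta=g$ a.e. for some $\eta\neq 1$, then $\eta\varepsilon$ and $\varepsilon$ are equal in distribution, and matching variances forces $\eta^2\cdot 1=1$, i.e.\ $\eta=1$ — the model moment condition itself rules out competing maximizers.

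The only subtlety worth flagging is ensuring $h(\eta)$ is finite in a neighborhood of $\eta=1$ so that both the differentiation step and Gibbs' inequality are legitimate. In the Gaussian case this is immediate from $E\varepsilon^2<\infty$; in the $f=g$ case it follows provided $H(g)$ is finite, a standard implicit regularity assumption in any MLE framework. Beyond these checks the proof involves no heavy machinery — the proposition essentially packages two textbook facts: the MLE of a scale parameter under Gaussian errors, and self-consistency of the MLE when the parametric family contains the truth.
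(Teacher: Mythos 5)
Your proof is correct, and for the case $f=g$ it is essentially the paper's argument in a different guise: the paper defines the likelihood-ratio function $G(\eta)=E\log\bigl(\tfrac{1}{\eta}f(\varepsilon/\eta)/f(\varepsilon)\bigr)$ and bounds it via $\log x\leq 2(\sqrt{x}-1)$, arriving at minus the squared Hellinger distance between $\tfrac{1}{\eta}f(\cdot/\eta)$ and $f$, whereas you bound the same quantity by minus the Kullback--Leibler divergence via Gibbs' inequality. Both are instances of the same information inequality and both pivot on the equality case $\tfrac{1}{\eta}f(\cdot/\eta)=f$ a.e. Your write-up is actually more complete on two points the paper glosses over: (i) the paper asserts without comment that equality forces $\eta=1$, while you supply the missing step that $\eta\varepsilon\overset{d}{=}\varepsilon$ together with the normalization $E\varepsilon^2=1$ rules out $\eta\neq 1$; and (ii) the paper's divergence argument, as written, only applies when the expectation density coincides with $f$ (i.e.\ the case $f=g$), leaving the Gaussian case implicit, whereas you dispose of it by the direct calculus computation $-\log\eta-1/(2\eta^2)$, which is the right elementary argument and requires no distributional assumption on $g$ beyond $E\varepsilon^2=1$. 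No gaps to report.
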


Moreover, it can be implied from \cite{Newey1997} that in general, an unscaled non-Gaussian likelihood function applied in this new re-parametrization of GARCH$(p,q)$ setting fails to identify the volatility scale parameter $\sigma$, resulting in inconsistent estimates. We show in the next section that incorporating $\eta_f$ into the likelihood function facilitates the identification of the volatility scale parameter.

For convenience, we assume the following regularity conditions are always satisfied: $f$ is twice continuously differentiable, and for any $\eta>0$, we have $\sup_{\boldsymbol{\theta}\in\Theta}E|l_t(\theta)|<\infty$, $E\sup_{\boldsymbol{\theta}\in\mathcal{N}}|\nabla l_t(\boldsymbol{\theta})|<\infty$, and $E\sup_{\boldsymbol{\theta}\in\mathcal{N}}|\nabla^2 l_t(\boldsymbol{\theta})|<\infty$, for some neighborhood $\mathcal{N}$ of $\boldsymbol{\theta_0}$.

\section{Main Results}
\subsection{Identification}

Identification is a critical condition for consistency. It requires that the expected quasi likelihood
$\bar{L}_T(\boldsymbol{\theta})=E(L_T(\boldsymbol{\theta}))$ has a unique maximum at the true
parameter value $\boldsymbol{\theta_0}$. To show that $\boldsymbol{\theta}$ can be identified in the presence of $\eta_f$, we make the following assumptions:
\begin{Assumption}\label{ASS1}A quasi likelihood of the GARCH $(p,q)$ model is selected such that
\begin{enumerate}
\item $v_t(\boldsymbol{\gamma_0})>0$, and $v_t(\boldsymbol{\gamma})/v_t(\boldsymbol{\gamma_0})$ is
not a constant if $\boldsymbol{\gamma}\neq \boldsymbol{\gamma_0}$.
\item The function $Q(\eta)=-\log \eta+E(\log
f(\frac{\varepsilon}{\eta}))$ has a unique maximizer $\eta_f>0$.
\end{enumerate}
\end{Assumption}

Note that the first point is the usual identification condition for the autoregression parameter $\boldsymbol{\gamma_0}$, and that the second requirement is the key to the identification of $\sigma_0$.

\begin{Lemma}\label{lem:identification}
Given Assumption \ref{ASS1}, $\bar{L}_T(\boldsymbol{\theta})$ has
a unique maximum at the true value $\boldsymbol{\theta}=\boldsymbol{\theta_0}$.
\end{Lemma}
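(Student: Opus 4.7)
The plan is to write out $\bar{L}_T(\boldsymbol{\theta})$ explicitly, factor out the part that does not depend on $\boldsymbol{\theta}$, and reduce everything to the scalar function $Q$ from Assumption \ref{ASS1}. First I would substitute $x_t=\sigma_0 v_t(\boldsymbol{\gamma_0})\varepsilon_t$ inside the log-likelihood, giving
\begin{eqnarray*}
\bar{L}_T(\boldsymbol{\theta})
=-\log\sigma - E\log v_t(\boldsymbol{\gamma})
+ E\log f\!\left(\frac{\sigma_0 v_t(\boldsymbol{\gamma_0})}{\eta_f\,\sigma\, v_t(\boldsymbol{\gamma})}\,\varepsilon_t\right).
\end{eqnarray*}
Since $v_t(\boldsymbol{\gamma})$ and $v_t(\boldsymbol{\gamma_0})$ are measurable with respect to the past innovations $\{\varepsilon_s\}_{s<t}$, the ratio
$R_t(\boldsymbol{\theta}):=\dfrac{\eta_f\,\sigma\, v_t(\boldsymbol{\gamma})}{\sigma_0 v_t(\boldsymbol{\gamma_0})}$
is independent of $\varepsilon_t$.

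Next I would apply Fubini and condition on $R_t(\boldsymbol{\theta})$: for each fixed value $\eta=R_t(\boldsymbol{\theta})$, the inner expectation over $\varepsilon_t$ equals $E\log f(\varepsilon/\eta)$, so by the definition of $Q$ we have
\begin{eqnarray*}
E\log f\!\left(\varepsilon_t/R_t(\boldsymbol{\theta})\,\big|\,R_t(\boldsymbol{\theta})\right)
= Q(R_t(\boldsymbol{\theta})) + \log R_t(\boldsymbol{\theta}).
\end{eqnarray*}
Plugging this back in and observing that the $\log R_t(\boldsymbol{\theta})$ term exactly cancels the $-\log\sigma - E\log v_t(\boldsymbol{\gamma})$ contribution after collecting the $\sigma_0$, $v_t(\boldsymbol{\gamma_0})$ and $\eta_f$ pieces, I obtain
\begin{eqnarray*}
\bar{L}_T(\boldsymbol{\theta}) = E\big[Q(R_t(\boldsymbol{\theta}))\big] + C,
\end{eqnarray*}
where $C=-\log\sigma_0-E\log v_t(\boldsymbol{\gamma_0})+\log\eta_f$ does not depend on $\boldsymbol{\theta}$.

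Finally I would invoke Assumption \ref{ASS1}. By part 2, $Q(\eta)\le Q(\eta_f)$ with equality iff $\eta=\eta_f$, so pointwise $Q(R_t(\boldsymbol{\theta}))\le Q(\eta_f)$ with equality iff $R_t(\boldsymbol{\theta})=\eta_f$, i.e.\ $\sigma v_t(\boldsymbol{\gamma})=\sigma_0 v_t(\boldsymbol{\gamma_0})$ a.s. Taking expectations gives $\bar{L}_T(\boldsymbol{\theta})\le \bar{L}_T(\boldsymbol{\theta_0})$. Equality forces $v_t(\boldsymbol{\gamma})/v_t(\boldsymbol{\gamma_0})=\sigma_0/\sigma$ a.s., which is a constant, so by part 1 we get $\boldsymbol{\gamma}=\boldsymbol{\gamma_0}$ and hence $\sigma=\sigma_0$, proving uniqueness.

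The main obstacle is really just the algebraic bookkeeping in the second step: one has to recognize that the extra log terms produced by pulling $R_t(\boldsymbol{\theta})$ into the argument of $\log f$ must match exactly the $-\log\sigma-E\log v_t(\boldsymbol{\gamma})$ part of the likelihood in order to isolate the clean $E[Q(R_t(\boldsymbol{\theta}))]$ expression. Once that cancellation is in hand, the two halves of Assumption \ref{ASS1} combine transparently to deliver joint identification of the volatility scale $\sigma$ and the autoregressive vector $\boldsymbol{\gamma}$.
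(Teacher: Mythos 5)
Your proposal is correct and follows essentially the same route as the paper's own proof: condition on the past so that the expected log-likelihood reduces to $E\big[Q(R_t(\boldsymbol{\theta}))\big]$ plus a $\boldsymbol{\theta}$-free constant, then invoke part 2 of Assumption \ref{ASS1} for the pointwise inequality and part 1 to rule out equality unless $\boldsymbol{\gamma}=\boldsymbol{\gamma_0}$ and $\sigma=\sigma_0$. Your write-up is in fact somewhat more explicit than the paper's about the cancellation of the logarithmic terms and the final uniqueness argument.
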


The next lemma provides a few primitive sufficient conditions for the last statement of
Assumption \ref{ASS1}. The conditions given below provide a general guideline of choosing an appropriate quasi likelihood $f$.

\begin{Lemma}{\label{lem:primitive}}
Assume that $f$ is continuously differentiable up to the second order
and $h(x)=x\frac{\dot{f}(x)}{f(x)}$. Suppose that
$\{\varepsilon_t\}\sim\varepsilon$ is i.i.d. with mean $0$,
variance $1$ and a finite $p^{th}$ moment. If, in addition,
\begin{enumerate}
  \item $h(x)\leq 0$.
   \item $x\dot{h}(x)\leq 0$ and the equality holds if and only if $x=0$.
  \item $|h(x)|\leq C|x|^p$, and $|x\dot{h}(x)|\leq C|x|^p$, for some constant $C>0$, and $p\geq 0$.
  \item $\displaystyle\limsup_{x\rightarrow \infty}h(x)<-1$
\end{enumerate}
then $Q(\eta)$ has a unique maximum at some point $\eta_f>0$. Furthermore, $\eta_f>1$ if and only if $Eh({\varepsilon})<-1$.
\end{Lemma}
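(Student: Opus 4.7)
The plan is to characterize $\eta_f$ as the unique interior critical point of $Q$ via a first-order condition on $E[h(\varepsilon/\eta)]$, and then deduce uniqueness from strict monotonicity. Differentiating under the expectation — legitimate because condition 3 gives $|h(x)|\le C|x|^p$ together with $E|\varepsilon|^p<\infty$, furnishing a dominating function — a short computation yields
\[
Q'(\eta) \;=\; -\frac{1}{\eta}\bigl(1+H(\eta)\bigr), \qquad H(\eta):=E\bigl[h(\varepsilon/\eta)\bigr],
\]
so interior critical points of $Q$ are exactly the zeros of $1+H(\eta)$.

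A second differentiation (again dominated by $C|x|^p$ via the bound on $|x\dot{h}(x)|$ in condition 3) gives
\[
H'(\eta) \;=\; -\frac{1}{\eta}\,E\!\left[\frac{\varepsilon}{\eta}\,\dot{h}\!\left(\frac{\varepsilon}{\eta}\right)\right].
\]
Condition 2 makes the integrand nonpositive almost surely and strictly negative on $\{\varepsilon\neq 0\}$; since $E\varepsilon^2=1$ rules out $\varepsilon\equiv 0$, we obtain $H'(\eta)>0$ for every $\eta>0$, so $H$ is strictly increasing on $(0,\infty)$.

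For the boundary behavior, $h(0)=0\cdot\dot{f}(0)/f(0)=0$, and as $\eta\to\infty$ the argument $\varepsilon/\eta\to 0$ almost surely, so dominated convergence gives $H(\eta)\to 0$; in particular $1+H(\eta)\to 1>0$ and $Q'(\eta)<0$ for all large $\eta$. As $\eta\to 0^+$, $|\varepsilon/\eta|\to\infty$ almost surely, and the monotonicity of $h$ implied by condition 2 (nonincreasing on $(0,\infty)$, nondecreasing on $(-\infty,0)$, with maximum $h(0)=0$) together with condition 4 forces $h(\varepsilon/\eta)$ to drop below a level strictly less than $-1$ on the positive tail while remaining nonpositive elsewhere; Fatou's lemma, controlled by the $|x|^p$ envelope of condition 3, then yields $\limsup_{\eta\to 0^+}H(\eta)<-1$. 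Continuity and strict monotonicity of $H$ now provide a unique $\eta_f>0$ solving $H(\eta_f)=-1$, and at that point $Q'$ flips sign from $+$ to $-$, so $\eta_f$ is the unique global maximizer of $Q$. The ``furthermore'' claim is immediate from monotonicity: $\eta_f>1\Longleftrightarrow H(1)<H(\eta_f)=-1\Longleftrightarrow E[h(\varepsilon)]<-1$.

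The \emph{main obstacle} will be the final part of the boundary analysis: converting the one-sided tail condition $\limsup_{x\to\infty}h(x)<-1$ into a usable two-sided statement about $H(\eta)$ as $\eta\to 0^+$. One has to combine the monotonicity of $h$ away from zero (from condition 2) with condition 4 to argue that $h$ eventually lies below a threshold $<-1$ on the positive tail, and that the contribution from the negative tail — being nonpositive by condition 1 — cannot lift the expectation above $-1$; the limit passage itself is then routine given the polynomial envelope and the finite $p$-th moment of $\varepsilon$. Once this tail inequality is secured, existence and uniqueness of $\eta_f$, together with the $\eta_f>1$ dichotomy, drop out of strict monotonicity alone.
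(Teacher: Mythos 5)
Your proof follows essentially the same route as the paper's: both reduce the problem to the first-order condition $Q'(\eta)=-\eta^{-1}\bigl(1+E[h(\varepsilon/\eta)]\bigr)$, establish strict monotonicity of $\eta\mapsto E[h(\varepsilon/\eta)]$ from condition 2 (with domination supplied by condition 3 and the finite $p$th moment), pin down the boundary limits at $\eta\to\infty$ and $\eta\to 0^+$ via dominated convergence and Fatou, and conclude existence and uniqueness of the zero, with the $\eta_f>1$ dichotomy falling out of monotonicity. The ``main obstacle'' you flag --- that condition 4 only controls the positive tail of $h$, so the negative-tail contribution must be handled separately --- is a genuine subtlety, but the paper's own proof simply invokes ``Fatou's lemma along with 1 and 4'' without resolving it either, so you are not missing anything relative to the published argument.
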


The last three assumptions are more general than the concavity of the function $Q(\eta)$. For some commonly used likelihood such as the family of Student's $t$ likelihood, the concavity assumption of $Q$ is violated. However, it still satisfies the above lemma. A few examples of families of likelihood that satisfy Lemma \ref{lem:primitive} are given below.

\begin{Remark}\label{rmk:likelihood}
If $f=\frac{1}{\sqrt{2\pi}}e^{-\frac{x^2}{2}}$, then $h(x)=-x^2$. If
$f$ is the standardized $t_\nu$-distribution with $\nu>2$, that is
$f\propto (1+\frac{x^2}{\nu-2})^{-\frac{\nu+1}{2}}$, then
$h(x)=-\frac{(\nu+1) x^2}{\nu-2+x^2}$. Both cases satisfy Lemma
\ref{lem:primitive} with $p=2$. In addition, if  $\log f(x) =
-|x|^\beta(\frac{\Gamma(\frac{3}{\beta})}{\Gamma(\frac{1}{\beta})})^{\frac{\beta}{2}}
+ \mbox{const}$, the generalized Gaussian likelihood, then
$h(x)=-\beta(\frac{\Gamma(\frac{3}{\beta})}{\Gamma(\frac{1}{\beta})})^{\frac{\beta}{2}}|x|^{\beta}$.
In this case, by choosing $p=\beta$, Lemma \ref{lem:primitive} is satisfied.
\end{Remark}

\subsection{The Distinction Between Gaussian and Non-Gaussian QMLE}

First of all, consider the case in which $\eta_f$ is given, or more directly, the true error distribution is known. The following asymptotic analysis reveals the difference between the Gaussian QMLE and the non-Gaussian one.
\begin{Theorem}\label{thm:Con1}Assume that $\eta_f$ is known. Under Assumptions \ref{ASS1}, $\hat{\boldsymbol{\theta}}_{\boldsymbol{T}}\stackrel{\rm
\mathcal{P}}{\longrightarrow}{\boldsymbol{\theta_0}}$, where $\hat{\boldsymbol{\theta}}_{\boldsymbol{T}}$ is the quasi likelihood estimator obtained by maximizing (\ref{eq:likelihood}).
\end{Theorem}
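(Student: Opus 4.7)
The overall strategy is the standard consistency theorem for M-estimators: to conclude $\hat{\boldsymbol{\theta}}_T \to \boldsymbol{\theta_0}$ in probability, it suffices to verify (i) $\Theta$ is compact, (ii) $\bar{L}_T(\boldsymbol{\theta})$ has a unique maximum at $\boldsymbol{\theta_0}$, and (iii) $L_T$ converges in probability to $\bar{L}_T$ uniformly on $\Theta$. Item (i) is part of the model setup, while item (ii) is exactly Lemma \ref{lem:identification}, which itself hinges on Assumption \ref{ASS1} and in particular on the unique maximizer $\eta_f$ being available. Consequently, the entire burden of the proof falls on the uniform law of large numbers in (iii).

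For (iii), I would exploit the strict stationarity and ergodicity of $\{x_t\}$. Since $v_t$ is a measurable function of the past of $\{x_t\}$, the sequence $\{l_t(\boldsymbol{\theta})\}$ is strictly stationary and ergodic for each fixed $\boldsymbol{\theta}$, and Birkhoff's ergodic theorem then yields $L_T(\boldsymbol{\theta})\to \bar{L}_T(\boldsymbol{\theta})$ almost surely, valid because the regularity condition $\sup_{\boldsymbol{\theta}}E|l_t(\boldsymbol{\theta})|<\infty$ is assumed. To upgrade pointwise convergence to uniform convergence over the compact $\Theta$, I would use a standard compactness argument: cover $\Theta$ by finitely many open balls, exploit the continuous differentiability of $l_t$ in $\boldsymbol{\theta}$ together with the local gradient envelope $E\sup_{\mathcal{N}}|\nabla l_t(\boldsymbol{\theta})|<\infty$ to obtain stochastic equicontinuity on each ball, and combine with the pointwise ergodic limits at the centers. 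This is Jennrich's (or Newey-McFadden's) uniform law of large numbers specialized to an ergodic stationary sequence. Note that the recursion $v_t^2 = 1+\sum_i a_i x_{t-i}^2+\sum_j b_j v_{t-j}^2$ automatically enforces $v_t\geq 1$ uniformly in $\boldsymbol{\gamma}$, which rules out singularities of $\log(\sigma v_t)$ in the criterion and keeps $l_t$ well-behaved throughout the compact parameter space.

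The chief technical obstacle is that $v_t(\boldsymbol{\gamma})$ depends on the entire unobserved infinite past through the GARCH recursion, while in practice the criterion is evaluated on a truncation $\tilde{v}_t$ computed from a finite sample with some initial values. One must therefore argue that $|\tilde{v}_t - v_t|$ has no asymptotic effect on $L_T$. This is the classical GARCH consistency subtlety, resolved through the geometric contraction afforded by $\sum_j b_j < 1$ (implicit in the stationarity hypothesis, in the style of the arguments in Berkes, Horvath, Kokoszka and Francq-Zakoian); the difference decays at a geometric rate in $t$ and is swamped by the ergodic averaging. Once the uniform law is established, combining it with the identification provided by Lemma \ref{lem:identification} and the argmax continuous-mapping theorem immediately yields $\hat{\boldsymbol{\theta}}_T\to\boldsymbol{\theta_0}$ in probability. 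The hypothesis that $\eta_f$ is known is what makes this step genuinely straightforward: $\eta_f$ enters the log-likelihood merely as a fixed constant multiplier inside $f$, not as an additional parameter requiring joint estimation, so the dimensionality and structure of the optimization are exactly as in a standard GARCH QMLE argument.
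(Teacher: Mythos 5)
Your argument is correct in outline, but it follows a genuinely different route from the paper's. The paper disposes of this theorem in one line: it verifies the conditions of Pfanzagl (1969) in the manner of Elie and Jeantheau (1995). That is a Wald-type consistency argument: one covers the compact $\Theta$ by small balls, applies the ergodic theorem to the \emph{local suprema} $\sup_{\boldsymbol{\theta}\in B} l_t(\boldsymbol{\theta})$ (which need only be integrable from above), lets the balls shrink so that the limiting expected suprema approach $\bar{L}_T(\boldsymbol{\theta})$ by continuity, and concludes from the identification in Lemma \ref{lem:identification} that the maximizer cannot escape any neighborhood of $\boldsymbol{\theta_0}$. You instead run the uniform-law-of-large-numbers route (Jennrich/Newey--McFadden): pointwise ergodic convergence plus stochastic equicontinuity via a gradient envelope, then the argmax theorem. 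Both are standard and both lean on the same three pillars (compactness, Lemma \ref{lem:identification}, ergodicity), and your treatment of the truncation of $v_t$ by geometric contraction is exactly the step the paper leaves implicit by citation. The trade-off: the Pfanzagl/Wald route needs only a one-sided integrable envelope for the log-likelihood over small balls and no differentiability, which matches the paper's stated regularity conditions more closely; your route is more modular and connects naturally to the Newey--McFadden machinery the paper itself invokes for Theorem \ref{thm:CLT2}. One small caveat on your version: the paper assumes the gradient envelope $E\sup|\nabla l_t|<\infty$ only on a neighborhood $\mathcal{N}$ of $\boldsymbol{\theta_0}$, whereas your equicontinuity argument requires such an envelope on every ball in the finite cover of $\Theta$; you would need to either extend that assumption globally or verify it directly from the GARCH structure (which is doable, since $v_t\geq 1$ keeps the criterion well behaved), so this is a fixable mismatch rather than a fatal gap.
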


Next, we discuss the asymptotic normality of the QMLE. As usual, additional moment conditions are needed.

\begin{Assumption}\label{ASS2}Let $\boldsymbol{k}=({1\over \sigma}, {1\over v_t}{\partial v_t \over \partial \boldsymbol{\gamma}}')'$, and $\boldsymbol{k_0}$ be its value at $\theta = \theta_0$.
\begin{enumerate}
\item $0<E(h^2(\frac{\varepsilon}{\eta_f}))<\infty$, $0<E|\varepsilon \dot{h}(\frac{\varepsilon}{\eta_f})|<\infty$.
\item $\boldsymbol{M}=E(\boldsymbol{k_0}\boldsymbol{k_0}')<\infty$.
\end{enumerate}
\end{Assumption}

\begin{Theorem}\label{thm:CLT1}
Under Assumptions \ref{ASS1} and \ref{ASS2}, we
have
\begin{eqnarray} \label{eq:asym_variance_intheorem}
T^{\frac{1}{2}}(\hat{\boldsymbol{\theta}}_{\boldsymbol{T}}-\boldsymbol{\theta_0})\stackrel{\rm
\mathcal{L}}{\longrightarrow}N\Big(\boldsymbol{0}, \boldsymbol{\Sigma_{1}}
\Big)\end{eqnarray}
where $\boldsymbol{\Sigma_{1}} = \boldsymbol{M}^{-1} \frac{E h_1^2}{(Eh_2)^2}$, $h_1(\ve)=1+h(\frac{\varepsilon}{\eta_f})$, and $h_2(\ve)=\frac{\varepsilon}{\eta_f}\dot{h}(\frac{\varepsilon}{\eta_f})$.
\end{Theorem}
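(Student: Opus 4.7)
My plan is to follow the standard Taylor‐expansion route for QMLE asymptotic normality, but to exploit the re-parametrization carefully so that the score comes out as a clean martingale difference sequence and the sandwich formula collapses to the stated variance.

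First I would compute the score at $\theta_0$. Writing $u_t=x_t/(\eta_f\sigma v_t)$ one obtains $\nabla u_t = -u_t\,\boldsymbol{k}$ with $\boldsymbol{k}$ as defined in Assumption \ref{ASS2}, so a direct differentiation gives
\begin{equation*}
\nabla l_t(\boldsymbol{\theta})=-\boldsymbol{k}\bigl(1+h(u_t)\bigr).
\end{equation*}
At $\boldsymbol{\theta}=\boldsymbol{\theta_0}$ we have $u_t=\varepsilon_t/\eta_f$, hence $\nabla l_t(\boldsymbol{\theta_0})=-\boldsymbol{k_0}\,h_1(\varepsilon_t)$. The crucial observation is that the first-order condition defining $\eta_f$ in \eqref{eq:eta_likelihood} is precisely $Eh(\varepsilon/\eta_f)=-1$, i.e.\ $Eh_1(\varepsilon)=0$. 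Since $\boldsymbol{k_0}$ is $\mathcal{F}_{t-1}$-measurable (only $v_t$ appears, and it depends on past $x$'s) and $\varepsilon_t$ is independent of $\mathcal{F}_{t-1}$, $\{\nabla l_t(\boldsymbol{\theta_0})\}$ is a stationary martingale difference sequence with conditional variance $\boldsymbol{k_0}\boldsymbol{k_0}'\,Eh_1^2$.

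Second, I would apply the martingale CLT of Billingsley to $T^{1/2}\nabla L_T(\boldsymbol{\theta_0})$. The conditional variance averages, by ergodicity of $\{x_t\}$ and Assumption \ref{ASS2}, to $\boldsymbol{M}\,Eh_1^2$; the conditional Lindeberg condition follows from the independence $\boldsymbol{k_0}\perp\varepsilon_t$ together with $E\|\boldsymbol{k_0}\|^2<\infty$ and $Eh_1^2<\infty$. This gives
\begin{equation*}
T^{1/2}\nabla L_T(\boldsymbol{\theta_0})\stackrel{\mathcal{L}}{\longrightarrow} N\bigl(\boldsymbol{0},\boldsymbol{M}\,Eh_1^2\bigr).
\end{equation*}

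Third, differentiating $\nabla l_t$ once more, and using $\nabla u_t=-u_t\boldsymbol{k}$, yields
\begin{equation*}
\nabla^2 l_t(\boldsymbol{\theta_0})=-(\nabla \boldsymbol{k_0})\,h_1(\varepsilon_t)+\boldsymbol{k_0}\boldsymbol{k_0}'\,h_2(\varepsilon_t).
\end{equation*}
Taking expectation and invoking $Eh_1=0$, the first term vanishes and $E\nabla^2 l_t(\boldsymbol{\theta_0})=\boldsymbol{M}\,Eh_2$. The regularity condition $E\sup_{\mathcal{N}}\|\nabla^2 l_t\|<\infty$ stated in Section 2, combined with stationarity and ergodicity, delivers a uniform LLN for $\nabla^2 L_T$ on a neighborhood of $\boldsymbol{\theta_0}$. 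Consequently, at any $\tilde{\boldsymbol{\theta}}$ on the segment between $\hat{\boldsymbol{\theta}}_T$ and $\boldsymbol{\theta_0}$ (where consistency from Theorem \ref{thm:Con1} keeps $\tilde{\boldsymbol{\theta}}$ inside $\mathcal{N}$ with probability tending to one), $\nabla^2 L_T(\tilde{\boldsymbol{\theta}})\stackrel{\mathcal{P}}{\longrightarrow}\boldsymbol{M}\,Eh_2$.

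Finally, the Taylor expansion $0=\nabla L_T(\hat{\boldsymbol{\theta}}_T)=\nabla L_T(\boldsymbol{\theta_0})+\nabla^2 L_T(\tilde{\boldsymbol{\theta}})(\hat{\boldsymbol{\theta}}_T-\boldsymbol{\theta_0})$ combined with Slutsky's theorem yields the claimed limit
\begin{equation*}
T^{1/2}(\hat{\boldsymbol{\theta}}_T-\boldsymbol{\theta_0})\stackrel{\mathcal{L}}{\longrightarrow}N\!\left(\boldsymbol{0},\,\boldsymbol{M}^{-1}\frac{Eh_1^2}{(Eh_2)^2}\right).
\end{equation*}
The main technical obstacle is the uniform LLN for $\nabla^2 L_T$: it requires controlling the second derivatives of $v_t$ with respect to $\boldsymbol{\gamma}$ (which involve infinite GARCH-type recursions) uniformly over $\mathcal{N}$; the invertibility assumption on $\boldsymbol{M}$ is needed so that the sandwich inversion is well defined. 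The martingale CLT step and the algebraic simplification of the asymptotic variance are, by contrast, essentially automatic once $Eh_1=0$ is recognized as the identity behind $\eta_f$.
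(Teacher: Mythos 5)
Your proposal is correct and follows essentially the same route as the paper: both rest on recognizing that the first-order condition defining $\eta_f$ gives $Eh_1(\varepsilon)=0$, so that the score $-\boldsymbol{k_0}h_1(\varepsilon_t)$ is a martingale difference sequence, then combine a martingale CLT for the score with convergence of the (negative) Hessian to $\boldsymbol{M}\,Eh_2$ and invert the sandwich. The only difference is bookkeeping: the paper Taylor-expands $\sigma_t^2$, $\boldsymbol{\rho_t}$ and $h$ separately and controls the remainders by citing Hall and Yao (2003) over a local neighborhood, whereas you invoke a uniform LLN for $\nabla^2 L_T$ justified by the regularity conditions of Section 2 — these are equivalent ways of handling the same technical step.
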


The moment conditions given by the first point of Assumption \ref{ASS2} only depend on the tail of the innovation density $g$ and quasi likelihood $f$. A striking advantage of non Gaussian QMLE over its Gaussian alternative is that the former may require weaker conditions on the tail of the innovation. It is well known that the asymptotic normality of Gaussian likelihood requires a finite fourth moment. By contrast, it implies from Remark \ref{rmk:likelihood} that any Student's $t$ likelihood with degree of freedom larger than 2 has a bounded moment, so that no additional moment conditions are needed other than those assumed in any GARCH model.

Moreover, it turns out that model parameter $\eta_f$ has another interpretation as a bias correction for a simple non-Gaussian QMLE of the scale parameter in that $\sigma_0\eta_f$ would be reported instead of $\sigma_0$. Therefore, the unscaled QMLE can consistently estimate $\sigma_0$ if and only if $\eta_f=1$. Proposition \ref{cor:example} hence reveals the distinction in consistency between the MLE, Gaussian QMLE and the other alternatives.

In general, for an arbitrary likelihood, $\eta_f$ would not equal to 1, thereby creating the popularity of the Gaussian QMLE, whose $\eta_f$ is exactly 1. It is therefore necessary to incorporate this bias-correction factor $\eta_f$ into non-Gaussian QMLE, which may potentially obtain a better efficiency than the Gaussian QMLE. However, as we have no prior information concerning the true innovation density, $\eta_f$ is unknown. As a result, this estimator is infeasible. A promising way to resolve this issue would be to estimate $\eta_f$ in the first step.

\subsection{Two-Step Estimation Procedure}
In order to estimate $\eta_f$, a sample on the true innovation is required. According to Proposition \ref{cor:example}, without knowing $\eta_f$, the residuals from the Gaussian QMLE may potentially provide substitution for the true innovation sample. A two-step estimation procedure is proposed in the following. In the first step, $\hat{\eta}_f$ is obtained by maximizing (\ref{eq:eta_likelihood}) with estimated residuals from Gaussian quasi likelihood estimation:
\begin{equation} \label{eq:ML for etaf}
    \hat \eta_f = \mbox{argmax}_{\eta} {1\over T}\sum_{t=1}^T l_2(x_t, \tilde {\boldsymbol{\theta}}_{\boldsymbol{T}}, \eta) = \mbox{argmax}_{\eta} {1\over T}\sum_{t=1}^T \Big( -\log(\eta) + \log f({\tilde \varepsilon_t \over \eta})  \Big)
    \end{equation}
   where
     \begin{equation}
    \tilde{ \boldsymbol{\theta}}_{\boldsymbol{T}} = \mbox{argmax}_{\boldsymbol{\theta}} {1\over T} \sum_{t=1}^T l_1(x_t,\boldsymbol{\theta}) = \mbox{argmax}_{\boldsymbol{\theta}} {1\over T} \sum_{t=1}^T \Big( -\log (\sigma v_t) - {x_t^2 \over 2 \sigma^2 v_t^2}  \Big)
    \end{equation}
and $\tilde \varepsilon_t = x_t / (\tilde \sigma v_t(\tilde{\boldsymbol{ \gamma}}))  $. Next, we maximize non-Gaussian quasi likelihood with plug-in $\hat \eta_f$ and obtain $\hat{ \boldsymbol{\theta}}_{\boldsymbol{T}}$:
    \begin{equation} \label{eq:ML for 2SNG}
    \hat{ \boldsymbol{\theta}}_{\boldsymbol{T}} = \mbox{argmax}_{\boldsymbol{\theta}}{1\over T} \sum_{t=1}^T l_3(x_t, \hat \eta_f, \boldsymbol{\theta}) = \mbox{argmax}_{\boldsymbol{\theta}} {1\over T} \sum_{t=1}^T \Big( -\log(\sigma v_t) + \log f({ x_t \over \hat \eta_f \sigma v_t})  \Big)
    \end{equation}

We call $\hat{ \boldsymbol{\theta}}_{\boldsymbol{T}}$ the two step non-Gaussian QMLE, 2SNG-QMLE for short. Alternatively, this two-step procedure can be viewed as a one-step generalized methods of moments (GMM) procedure, by considering the score functions. Denote $\tilde {\boldsymbol{s}}(x,\boldsymbol{\theta}, \eta, \boldsymbol{\phi}) =(s_1(x,\boldsymbol{\theta}),
s_2(x, \boldsymbol{\theta}, \eta), s_3(x,\eta, \boldsymbol{\phi}))'$, where
\begin{eqnarray}
s_1(x_t,\boldsymbol{\theta}) &=& {\partial \over \partial \boldsymbol{\theta}} l_1(x_t,\boldsymbol{\theta}) = \boldsymbol{k} \Big(-1 + {x_t^2 \over \sigma^2 v_t^2}\Big)\\
s_2(x_t,\boldsymbol{\theta},\eta) &=& {\partial \over \partial \eta} l_2(x_t, \boldsymbol{\theta},\eta)= -{1\over \eta} \Big( 1+ h({ x_t \over \eta \sigma v_t}) \Big)\\
s_3(x_t,\eta, \boldsymbol{\phi}) &=&{\partial \over \partial \boldsymbol{\phi}} l_3 (x_t,\eta, \boldsymbol{\phi}) = -\boldsymbol{k} \Big( 1+ h({ x_t \over \eta \sigma v_t})  \Big)
\end{eqnarray}
then the estimators are obtained using GMM with identity weighting matrix:
\begin{eqnarray}
(\tilde{\boldsymbol{\theta}}_{\boldsymbol{T}}, \hat \eta_f, \hat{\boldsymbol{\phi}}_{\boldsymbol{T}})=\mbox{argmin}_{\boldsymbol{\theta},\eta,\boldsymbol{\phi}}\frac{1}{T}\sum_{t=1}^T \tilde{\boldsymbol{s}}'(x_t,\boldsymbol{\theta}, \eta, \boldsymbol{\phi})\tilde{\boldsymbol{s}}(x_t,\boldsymbol{\theta}, \eta, \boldsymbol{\phi})
\end{eqnarray}
so our proposed estimator is simply $\hat {\boldsymbol{\theta}}_{\boldsymbol{T}}=\hat{\boldsymbol{\phi}}_{\boldsymbol{T}}$.

\subsection{Asymptotic Theory}
Identification for the parameters $\boldsymbol{\theta}$ and $\eta$ is straightforward. As in Theorem \ref{thm:Con1}, the consistency thereby holds:

\begin{Theorem} \label{thm:Con2}
Given Assumption \ref{ASS1},
$
(\tilde{ \boldsymbol{\theta}}_{\boldsymbol{T}}, \hat \eta_f, \hat{\boldsymbol{ \theta}}_{\boldsymbol{T}})\stackrel{\rm
\mathcal{P}}{\longrightarrow} (\boldsymbol{\theta_0}, \eta_f, \boldsymbol{\theta_0}),
$
in particular the 2SNG-QMLE $\hat{ \boldsymbol{\theta}}_{\boldsymbol{T}}$ is consistent.
\end{Theorem}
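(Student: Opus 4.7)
The plan is to chain three M-estimator consistency arguments, handling $\tilde{\boldsymbol{\theta}}_{\boldsymbol{T}}$, $\hat{\eta}_f$, and $\hat{\boldsymbol{\theta}}_{\boldsymbol{T}}$ in turn and feeding each conclusion into the next. The first stage dispatches as a special case of Theorem~\ref{thm:Con1}: the log-likelihood $l_1$ is exactly $l_t$ in (\ref{eq:likelihood}) for the Gaussian choice $f(x)\propto e^{-x^2/2}$, and by Proposition~\ref{cor:example} this choice has $\eta_f=1$, so no scale plug-in is needed. Item~2 of Assumption~\ref{ASS1} is trivial for the Gaussian $Q(\eta)=-\log\eta-\frac{1}{2}\log(2\pi)-\frac{1}{2\eta^2}$, so Theorem~\ref{thm:Con1} applies and gives $\tilde{\boldsymbol{\theta}}_{\boldsymbol{T}}\stackrel{\rm \mathcal{P}}{\longrightarrow}\boldsymbol{\theta_0}$.

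For the second stage, let $Q(\eta)=-\log\eta+E\log f(\varepsilon/\eta)$, uniquely maximized at $\eta_f>0$ by item~2 of Assumption~\ref{ASS1}. Introduce the oracle criterion $\tilde{Q}_T(\eta)=T^{-1}\sum_t\{-\log\eta+\log f(\varepsilon_t/\eta)\}$ and the feasible criterion $\hat{Q}_T(\eta)=T^{-1}\sum_t\{-\log\eta+\log f(\tilde{\varepsilon}_t/\eta)\}$, where $\tilde{\varepsilon}_t=x_t/(\tilde\sigma v_t(\tilde{\boldsymbol{\gamma}}))$. Stationarity and ergodicity of $\{x_t\}$, together with the uniform integrability assumed at the end of Section~2, give $\sup_{\eta\in K}|\tilde{Q}_T(\eta)-Q(\eta)|=o_p(1)$ on any compact $K\subset(0,\infty)$ containing $\eta_f$ via a uniform law of large numbers. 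To bridge the two sample criteria, I would Taylor-expand $\log f(\tilde{\varepsilon}_t/\eta)-\log f(\varepsilon_t/\eta)$ in $\boldsymbol{\theta}$ around $\boldsymbol{\theta_0}$ and invoke the domination condition on $\nabla l_t$ together with the first-step consistency, obtaining $\sup_{\eta\in K}|\hat{Q}_T(\eta)-\tilde{Q}_T(\eta)|=o_p(1)$. A standard argmax-consistency theorem then delivers $\hat{\eta}_f\stackrel{\rm \mathcal{P}}{\longrightarrow}\eta_f$.

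For the third stage I view the second-step objective as a joint criterion $M_T(\boldsymbol{\theta},\eta)=T^{-1}\sum_t l_3(x_t,\eta,\boldsymbol{\theta})$ on $\Theta\times K$, with limit $\bar{M}(\boldsymbol{\theta},\eta)=E\, l_3(x_t,\eta,\boldsymbol{\theta})$. The domination hypothesis $\sup_{\boldsymbol{\theta}}E|l_t|<\infty$ and ergodicity yield $\sup_{(\boldsymbol{\theta},\eta)\in\Theta\times K}|M_T-\bar{M}|=o_p(1)$. Lemma~\ref{lem:identification} identifies $\boldsymbol{\theta_0}$ as the unique maximizer of $\bar{M}(\cdot,\eta_f)$, and continuity of $\bar{M}$ in $\eta$ (from smoothness of $f$ and dominated convergence) forces the set of near-maximizers of $\bar{M}(\cdot,\eta)$ to collapse onto $\boldsymbol{\theta_0}$ as $\eta\to\eta_f$. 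Combining these with $\hat{\eta}_f\stackrel{\rm \mathcal{P}}{\longrightarrow}\eta_f$ through a two-stage argmax-consistency argument delivers $\hat{\boldsymbol{\theta}}_{\boldsymbol{T}}\stackrel{\rm \mathcal{P}}{\longrightarrow}\boldsymbol{\theta_0}$.

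The main obstacle I anticipate is the perturbation bound $\sup_{\eta\in K}|\hat{Q}_T-\tilde{Q}_T|=o_p(1)$ in the second stage: the substitution of $\tilde{\varepsilon}_t$ for $\varepsilon_t$ propagates through $h(\cdot/\eta)$, which for heavy-tailed $f$ can amplify small perturbations, so the mean-value bound must be controlled uniformly in $\eta\in K$ while keeping $\tilde\sigma$ and $v_t(\tilde{\boldsymbol{\gamma}})$ bounded away from zero with probability approaching one. Once this is handled, the two argmax-consistency invocations in the subsequent steps are routine.
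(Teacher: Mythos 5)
Your chained three-stage argument (Gaussian QMLE consistency via Theorem~\ref{thm:Con1} and Proposition~\ref{cor:example}, then argmax consistency for $\hat\eta_f$ with a perturbation bound for the fitted residuals, then a plug-in argmax consistency step using Lemma~\ref{lem:identification}) is correct and is essentially the argument the paper intends: the paper omits the proof entirely, remarking only that it is similar to that of Theorem~\ref{thm:Con1}, which in turn defers to standard identification-plus-uniform-convergence M-estimation theory (Elie--Jeantheau/Pfanzagl), or equivalently to the stacked-score GMM consistency framework mentioned in Section~3.3. Your sketch in fact supplies more detail than the paper does, and correctly isolates the only nontrivial technical step (uniform control of the residual substitution in $\hat Q_T$ over a compact $\eta$-range).
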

In order to obtain the asymptotic normality, we realize that a finite fourth moment for the innovation is essential in that the first step employs the Gaussian QMLE. Although alternative rate efficient estimators may be adopted to avoid moment conditions required in the first step, we prefer the Gaussian QMLE for its simplicity and popularity in practice.

\begin{Theorem}\label{thm:CLT2}
Assume that $E(\varepsilon^4)< \infty$, that Assumptions \ref{ASS1} and \ref{ASS2} are satisfied. Then $(\tilde{\boldsymbol{ \theta}}_{\boldsymbol{T}}, \hat \eta_f, \hat {\boldsymbol{\theta}}_{\boldsymbol{T}})$ are jointly normal asymptotically. That is,
\begin{align*}
&  \left(
\begin{array}
[c]{c}%
T^{1\over 2}(\tilde {\boldsymbol{\theta}}_{\boldsymbol{T}} - \boldsymbol{\theta_0})\\
T^{1\over 2}(\hat \eta_f - \eta_f)\\
T^{1\over 2}(\hat {\boldsymbol{\theta}}_{\boldsymbol{T}} - \boldsymbol{\theta_0})
\end{array}
\right)  \overset{\mathrm{\mathcal{L}}}{\longrightarrow}
N\Big (\left(
\begin{array}
[c]{c}%
\boldsymbol{0}\\
0\\
\boldsymbol{0}
\end{array}
\right)  ,\left(
\begin{array}
[c]{ccc}%
\boldsymbol{\Sigma_{G}} & \boldsymbol{\Pi}' & \boldsymbol{\Xi}\\
\boldsymbol{\Pi} & {\Sigma_{\eta_f}} &\boldsymbol{\Pi}\\
\boldsymbol{ \Xi} & \boldsymbol{\Pi}' &\boldsymbol{\Sigma_{2}}
\end{array}
\right)  \Big)
\end{align*}
where
\begin{eqnarray}
&&\boldsymbol{\Sigma_G}={E(\varepsilon^2 - 1)^2 \over 4} \boldsymbol{M}^{-1}\\
&&\boldsymbol{\Sigma_{2}} = {Eh_1(\ve)^2 \over (Eh_2(\ve))^2}\boldsymbol{M}^{-1} + \sigma_0^2 \Big( {E(\varepsilon^2 - 1)^2 \over 4 } - {Eh_1(\ve)^2 \over (Eh_2(\ve))^2} \Big)\boldsymbol{e_1}\boldsymbol{ e_1}'\label{eq:avar_hat theta}\\
&&{\Sigma_{\eta_f}}=\eta_f^2 E\Big( {\varepsilon^2 - 1\over 2} - {h_1(\ve) \over Eh_2(\ve)} \Big)^2\label{eq:avar_eta}\\
&&\boldsymbol{\Pi}={\eta_f\sigma_0 \over 2}E\Big((\varepsilon^2-1)({h_1(\ve)\over Eh_2(\ve)}-{\varepsilon^2-1\over 2})\Big)\boldsymbol{e'_1}\\
&&\boldsymbol{\Xi}=\frac{E(h_1(\ve)\cdot(\varepsilon^2-1))}{2E(h_2(\ve))}\boldsymbol{M^{-1}}-\frac{\sigma^2_0}{2}E\Big((\varepsilon^2-1)({h_1(\ve)\over Eh_2(\ve)}-{\varepsilon^2-1\over 2})\Big)\boldsymbol{e_1}\boldsymbol{e'_1}
\end{eqnarray}

where $\boldsymbol{e_1}$ is a unit column vector that has the same length as $\boldsymbol{\theta}$, with the first entry one and all the rest zeros.
\end{Theorem}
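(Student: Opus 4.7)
My plan is to treat the triple $(\tilde{\boldsymbol{\theta}}_T, \hat\eta_f, \hat{\boldsymbol{\theta}}_T)$ as the joint solution of the stacked score system $T^{-1}\sum_t \tilde{\boldsymbol{s}}(x_t,\boldsymbol{\theta},\eta,\boldsymbol{\phi}) = \boldsymbol{0}$ from equation (13) and to run a standard Z-estimator argument. Since Theorem \ref{thm:Con2} already delivers consistency, I would mean-value-expand each block of the stacked score around $(\boldsymbol{\theta}_0,\eta_f,\boldsymbol{\theta}_0)$ to get
\begin{align*}
\boldsymbol{0} = T^{-1/2}\sum_t \tilde{\boldsymbol{s}}\big|_0 + \bar{\boldsymbol{J}}_T \cdot T^{1/2}\Delta_T + o_p(1),
\end{align*}
where $\Delta_T$ stacks the three deviations and $\bar{\boldsymbol{J}}_T$ is the sample Jacobian at intermediate points. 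Ergodicity of $\{x_t\}$ together with the regularity conditions on $\nabla^2 l_t$ stated before Section 3 lets the uniform LLN replace $\bar{\boldsymbol{J}}_T$ by a deterministic limit $\boldsymbol{J}$. The dependence pattern ($s_1$ in $\boldsymbol{\theta}$; $s_2$ in $\boldsymbol{\theta},\eta$; $s_3$ in $\eta,\boldsymbol{\phi}$) makes $\boldsymbol{J}$ lower block triangular, so I can invert it and read off the three influence functions sequentially.

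Next I would compute the Jacobian blocks at the truth by routine chain-rule calculations, using $Eh_1(\varepsilon)=0$ (the first-order condition defining $\eta_f$) together with the independence of $\varepsilon_t$ from the $\mathcal{F}_{t-1}$-measurable $\boldsymbol{k}_{0,t}$. This yields $E\partial_{\boldsymbol{\theta}} s_1 = -2\boldsymbol{M}$, $E\partial_\eta s_2 = Eh_2/\eta_f^2$, and $E\partial_{\boldsymbol{\phi}} s_3 = \boldsymbol{M}\,Eh_2$, while the off-diagonal blocks $E\partial_{\boldsymbol{\theta}} s_2$ and $E\partial_\eta s_3$ are scalar multiples of $E\boldsymbol{k}_0$. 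The key structural identity I would exploit is
\begin{align*}
\boldsymbol{M}^{-1}E\boldsymbol{k}_0 = \sigma_0\boldsymbol{e}_1,
\end{align*}
which comes from $\boldsymbol{M}\boldsymbol{e}_1 = E[\boldsymbol{k}_0(\boldsymbol{k}_0'\boldsymbol{e}_1)] = E\boldsymbol{k}_0/\sigma_0$; equivalently, $l_3$ depends on $(\eta,\sigma)$ only through the product $\eta\sigma$, so any first-order perturbation of $\eta$ propagates into the third-step maximizer through its $\sigma$-coordinate alone. Inserting these pieces into the inverse of $\boldsymbol{J}$ yields the clean influence functions
\begin{align*}
T^{1/2}(\tilde{\boldsymbol{\theta}}_T - \boldsymbol{\theta}_0) &= \tfrac{1}{2}\boldsymbol{M}^{-1}T^{-1/2}\sum_t \boldsymbol{k}_{0,t}(\varepsilon_t^2-1) + o_p(1), \\
T^{1/2}(\hat\eta_f - \eta_f) &= \eta_f\,T^{-1/2}\sum_t\Big[\tfrac{h_1(\varepsilon_t)}{Eh_2} - \tfrac{\varepsilon_t^2-1}{2}\Big] + o_p(1), \\
T^{1/2}(\hat{\boldsymbol{\theta}}_T - \boldsymbol{\theta}_0) &= \tfrac{1}{Eh_2}\boldsymbol{M}^{-1}T^{-1/2}\sum_t \boldsymbol{k}_{0,t}h_1(\varepsilon_t) - \tfrac{\sigma_0}{\eta_f}T^{1/2}(\hat\eta_f - \eta_f)\boldsymbol{e}_1 + o_p(1).
\end{align*}
Each summand is a martingale difference under $\{\mathcal{F}_t\}$ with finite variance (using $E\varepsilon^4 < \infty$ for the first line and Assumption \ref{ASS2} for the others), so the multivariate martingale CLT combined with the Cram\'er--Wold device gives joint asymptotic normality.

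The remaining work is to square out the influence functions to read off the six variance--covariance blocks. The diagonals $\boldsymbol{\Sigma}_G$ and $\Sigma_{\eta_f}$ drop out immediately, and $\boldsymbol{\Sigma}_2$ splits into the oracle variance $(Eh_1^2/(Eh_2)^2)\boldsymbol{M}^{-1}$ from Theorem \ref{thm:CLT1} plus a rank-one correction from the $-(\sigma_0/\eta_f)(\hat\eta_f - \eta_f)\boldsymbol{e}_1$ piece and its cross-covariance with the oracle piece. I expect the main obstacle to be the bookkeeping for $\boldsymbol{\Pi}$ and $\boldsymbol{\Xi}$: both must collapse to the $\boldsymbol{e}_1$-direction, and this is exactly where the identity $\boldsymbol{M}^{-1}E\boldsymbol{k}_0 = \sigma_0\boldsymbol{e}_1$ earns its keep, combined with the independence of $\varepsilon_t$ from $\boldsymbol{k}_{0,t}$ used to factor the cross-expectations. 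Once these cancellations are executed, the closed forms stated in the theorem should fall out after some algebra.
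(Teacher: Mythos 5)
Your proposal is correct and takes essentially the same route as the paper: the paper also treats the triple as a stacked-score GMM problem (invoking Newey--McFadden Theorem 3.4 rather than writing out the mean-value expansion), exploits the same lower block-triangular Jacobian with blocks $\boldsymbol{G_{11}}=-2\boldsymbol{M}$, $G_{22}=Eh_2/\eta_f^2$, $\boldsymbol{G_{33}}=(Eh_2)\boldsymbol{M}$, and collapses the $\boldsymbol{\Pi}$ and $\boldsymbol{\Xi}$ blocks to the $\boldsymbol{e_1}$ direction via exactly your identity $\boldsymbol{M}^{-1}\bar{\boldsymbol{k}}_{\boldsymbol{0}}=\sigma_0\boldsymbol{e_1}$ (point 2 of Lemma \ref{lem:M^-1}, which the paper obtains through a Moore--Penrose block inversion while you derive it directly from $\boldsymbol{M}\boldsymbol{e_1}=E\boldsymbol{k_0}/\sigma_0$). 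Your three influence functions match the paper's $-\boldsymbol{G_{11}}^{-1}\boldsymbol{s_1}$, $-G_{22}^{-1}q_2$, and $-\boldsymbol{G_{33}}^{-1}\boldsymbol{q_3}$, so squaring them out yields the stated blocks.
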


Before a thorough efficiency analysis of the non-Gaussian QMLE $\hat{\boldsymbol{\theta}}_{\boldsymbol{T}}$, we first discuss the asymptotic property of $\hat \eta_f$. Although $\hat \eta_f$ is obtained using fitted residuals $\tilde \varepsilon_t$ in (\ref{eq:ML for etaf}), the asymptotic variance of $\hat \eta_f$ is not necessarily worse than that using the actual innovations $\varepsilon_t$. In fact, with true innovation the asymptotic variance of the $\eta_f$ estimator is $\eta_f^2 {Eh_1^2/(Eh_2)^2}$. Comparing it with (\ref{eq:avar_eta}), we can find that using fitted residual improves the efficiency as long as the $|{h_1/ Eh_2} - {(\varepsilon^2 -1) / 2}|$ is smaller than $|{h_1 / Eh_2}|$. This occurs when the non-Gaussian likelihood is close to Gaussian likelihood. One extreme case is choosing the same Gaussian likelihood in the second step. Then $\eta_f$ exactly equals one and the asymptotic variance of $\hat \eta_f$ vanishes.

$\eta_f$ also reveals the issue of asymptotic bias incurred by using unscaled non-Gaussian QMLE. From \ref{eq:ML for 2SNG}, while 2SNG-QMLE $\hat{\boldsymbol{\theta}}_{\boldsymbol{T}} = (\hat \sigma_T, \hat{\boldsymbol{\gamma}}_{\boldsymbol{T}})$ maximizes the log-likelihood, unscaled non-Gaussian QMLE would choose estimator $(\hat \eta_f \hat \sigma_T, \hat{\boldsymbol{\gamma}}_{\boldsymbol{T}})$ to maximize log-likelihood without $\eta_f$ in it. So for the volatility scale parameter $\sigma$ it is biased exactly according to the $\hat \eta_f$. Such bias will propagate if using the popular original parametrization. Recall
\begin{eqnarray*}
x_t&=&\sigma_t\varepsilon_t\\
\sigma^2_t&=&\tilde{c}+\sum_{i=1}^p\tilde{a}_i x_{t-i}^2+\sum_{j=1}^q\tilde{b}_j\sigma_{t-j}^2
\end{eqnarray*}
Clearly, we have $\sigma^2a_i=\tilde{a}_i$, $b_j=\tilde{b}_j$ and ${\sigma}^2 = c$. Therefore, potential model misspecification would result in systematic biases in the all estimates of $a_i$ and $c$ if unscaled non-Gaussian MLE, such as Student's $t$-likelihood, is applied without introducing $\eta_f$.

\subsection{Efficiency Gain over Gaussian QMLE}
We compare the efficiency of three estimators of $\boldsymbol{ \theta}$ using two step non-Gaussian QMLE, one step (infeasible) non-Gaussian QMLE with known $\eta_f$, and Gaussian QMLE. Their asymptotic variances are $\boldsymbol{\Sigma_2}$, $\boldsymbol{\Sigma_1}$ and $\boldsymbol{\Sigma_G}$ respectively. The difference in asymptotic variances between the first two estimators is
\begin{eqnarray}
\boldsymbol{\Sigma_{2}} - \boldsymbol{\Sigma_{1}} = \Big(
\begin{array}{cc}
\mu \sigma_0^2 & \boldsymbol{0} \\
\boldsymbol{0} & \boldsymbol{0}
\end{array}\Big)
\end{eqnarray}
where
\begin{eqnarray}
\mu = {E(\varepsilon^2-1)^2 \over 4} - {Eh_1^2 \over (Eh_2)^2}
\end{eqnarray}
Effectively, the sign and magnitude of $\mu$ summarize the advantage of knowing $\eta_f$. $\mu$ is usually positive when the true error has heavy tails while non-Gaussian QMLE is selected to be a heavy-tailed likelihood, illustrating the loss from not knowing $\eta_f$. However, it could also be negative when the true innovation has thin tail, indicating that not knowing $\eta_f$ is actually better when a heavy tail density is selected. Intuitively, this is because the two-step estimator incorporates a more efficient Gaussian QMLE into the estimation procedure. More importantly, the asymptotic variance of $\boldsymbol{\gamma}$ and the covariance between $\sigma$ and $\boldsymbol{\gamma}$ are not affected by the estimation of $\eta_f$. In other words, we achieve the adaptivity property for $\boldsymbol{\gamma}$: with an appropriate non-Gaussian QMLE, $\boldsymbol{\gamma}$ could be estimated without knowing $\eta_f$ equally well as if $\eta_f$ were known before.

We next compare the efficiency between Gaussian QMLE and 2SNG-QMLE. By (\ref{eq:asym_variance_intheorem}) with $f$ replaced by the Gaussian likelihood, we have
\begin{eqnarray*}
\boldsymbol{\Sigma_G}={E(\varepsilon^2 - 1)^2 \over 4} \boldsymbol{M}^{-1}
\end{eqnarray*}
It follows from Lemma \ref{lem:M^-1} in the appendix that,
\begin{eqnarray}\label{eff:QMLEvsMLE}
\boldsymbol{\Sigma_G} - \boldsymbol{\Sigma_{2}} = \mu \Big(
\begin{array}{cc}
\sigma_0^2 \bar{ \boldsymbol{y}}'_{\boldsymbol{0}} \boldsymbol{V} \bar{ \boldsymbol{y}}_{\boldsymbol{0}} & -\sigma_0 \bar {\boldsymbol{y}}_{\boldsymbol{0}}' \boldsymbol{V}\\
-\sigma_0 \boldsymbol{V} \bar {\boldsymbol{y}}_{\boldsymbol{0}} & \boldsymbol{V}
\end{array}\Big)
\end{eqnarray}
where $\boldsymbol{y_0}=v_t(\boldsymbol{\gamma_0})\frac{\partial v_t(\boldsymbol{\gamma_0})}{\partial \boldsymbol{\gamma}}$, $\bar {\boldsymbol{y}}_{\boldsymbol{0}}=E(\boldsymbol{y_0})$, $\boldsymbol{V}=\mbox{Var}(\boldsymbol{y_0})^{-1}$ and hereby the last matrix in (\ref{eff:QMLEvsMLE}) is positive definite.
Therefore as long as $\mu$ is positive,  non-Gaussian QMLE is more efficient for both $\sigma$ and $\boldsymbol{\gamma}$.

It is well known that the financial data sets such as stock prices and exchange rates exhibit heavy tails. Therefore, if a selected likelihood has heavier tails than Gaussian density, then $\mu$ is positive, and the efficiency of the QMLE is thereby improved over Gaussian QMLE.

\subsection{Efficiency Gap from the MLE}

Denote the asymptotic variance of the MLE as $\boldsymbol{\Sigma_M}$. By (\ref{eq:asym_variance_intheorem}) with $f$ replaced by the true likelihood $g$, we have
\begin{eqnarray*}
\boldsymbol{\Sigma_M}=\boldsymbol{M}^{-1}(\int_{-\infty}^{+\infty}x^2\frac{\dot{g}^2}{g}dx-1)^{-1}=\boldsymbol{M}^{-1}(E(h^2_g-1))^{-1}
\end{eqnarray*}
where $h_g=x\frac{\dot{g(x)}}{g(x)}$. The gap in asymptotic variance between 2SNG-QMLE and MLE is given by
\begin{eqnarray*}\label{eq:relative_eff}
\boldsymbol{\Sigma_2}-\boldsymbol{\Sigma_M} = \Big({Eh_1^2 \over (Eh_2)^2}-(E(h^2_g-1))^{-1}\Big)\boldsymbol{M}^{-1} + \boldsymbol{\sigma_0}^2 \Big( {E(\varepsilon^2 - 1)^2 \over 4 } - {Eh_1^2 \over (Eh_2)^2} \Big)\boldsymbol{e_1} \boldsymbol{e_1}'
\end{eqnarray*}

An extreme case is that the selected likelihood $f$ happens to be the true innovation density. Being unaware of it, we still apply a two-step procedure and uses the estimated $\eta_f$. Therefore, the first term in (\ref{eq:relative_eff}) vanishes, but the second term remains. Consequently, $\hat \gamma$ reach the efficiency bounds, while the volatility scale $\hat \sigma$ fails, reflecting the penalty of ignorance of the true model. This example also sheds light on the fact that $\hat{\boldsymbol{\theta}}_{\boldsymbol{T}}$ cannot obtain the efficiency bounds for all parameters unless the true underlying density and the selected likelihood are both Gaussian. This observation agrees with the comparison study in the \cite{GonzlezRivera_Drost1999} concerning the MLE and their semiparametric estimator.

\subsection{The Effect of the First Step Estimation}

We would like to further explore the oracle property of the estimator for heteroscedastic parameters by considering a general first step estimator. We have shown in Theorem \ref{thm:CLT2} that the efficiency of the estimator for $\boldsymbol{\gamma}$ is not affected by the first step estimation of $\eta_f$ using Gaussian QMLE, as if $\eta_f$ were known. Therefore, we may relax the finite fourth moment requirement on the innovation error by applying another efficient estimator in the first step. On the other hand, even if the first step estimator suffers from a lower rate, it may not affect the efficiency of the heteroscedastic parameters $\boldsymbol{\gamma}$, which is always $T^{1\over2}$ consistent and asymptotically normal.

\begin{Theorem}\label{thm:CLT3} Suppose that the first step estimator $\tilde{\boldsymbol{\theta}}$ has an influence function representation:
\begin{eqnarray*}
T \lambda_T^{-1}(\tilde{\boldsymbol{\theta}}-\boldsymbol{\theta_0})=\lambda_T^{-1}\sum_{t=1}^T\boldsymbol{\Psi_t}(\varepsilon_t)+\boldsymbol{o_P(1)}
\end{eqnarray*}
with the right hand side converging to a non-degenerate distribution, and $\lambda_T\sim T^{1/\alpha}$ for some $\alpha\in[1,2]$. Then the convergence rate for $\sigma$ is also $T \lambda_T^{-1}$, while
the same central limit theorem for $\gamma$ as in Theorem \ref{thm:Con2} remains, that is,
\begin{eqnarray*}
T^{\frac{1}{2}}(\boldsymbol{\gamma}-\boldsymbol{\gamma_0})\overset{\mathrm{\mathcal{L}}}{\longrightarrow}N(\boldsymbol{0},{Eh_1^2 \over (Eh_2)^2} \boldsymbol{V})
\end{eqnarray*}
where $\boldsymbol{V}=\Big(\mbox{Var}(\frac{1}{\nu(\boldsymbol{\gamma_0})}\frac{\partial \nu}{\partial \boldsymbol{\gamma}}|_{\boldsymbol{\gamma}=\boldsymbol{\gamma_0}})\Big)^{-1}$.
\end{Theorem}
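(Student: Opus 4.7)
The plan is to Taylor-expand the score equations defining $\hat{\boldsymbol{\theta}}_{\boldsymbol{T}}$ and $\hat\eta_f$ around the truth, insert the influence-function representation of $\tilde{\boldsymbol{\theta}}$, and exploit a single algebraic identity, $\boldsymbol{M}^{-1} E[\boldsymbol{k_0}] = \sigma_0 \boldsymbol{e_1}$, which confines the slow first-step error entirely to the $\sigma$-coordinate. The $\boldsymbol{\gamma}$-block is then driven by a $T^{-1/2}$-sum to which a stationary-ergodic martingale CLT applies, reproducing the oracle asymptotic distribution of Theorem \ref{thm:CLT1}.

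First I would differentiate (\ref{eq:ML for 2SNG}) and Taylor-expand the resulting FOC around $(\boldsymbol{\theta_0},\eta_f)$. Using that $s_3(x_t,\eta_f,\boldsymbol{\theta_0}) = -\boldsymbol{k_0}\, h_1(\ve_t)$, together with the defining identity $E[h_1(\ve)] = E[1+h(\ve/\eta_f)] = 0$ (which kills the cross-term coming from differentiating $\boldsymbol{k}$), the expected Jacobians collapse to $E[\partial s_3/\partial \boldsymbol{\theta}^T] = \boldsymbol{M}\,Eh_2$ and $E[\partial s_3/\partial \eta] = \eta_f^{-1}\, E[\boldsymbol{k_0}]\,Eh_2$. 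Performing the same exercise for the FOC (\ref{eq:ML for etaf}), and substituting the hypothesized influence-function representation of $\tilde{\boldsymbol{\theta}} - \boldsymbol{\theta_0}$, yields
$$\hat\eta_f - \eta_f = \frac{\eta_f}{Eh_2}\,\bar{h}_1 \;-\; \eta_f\, E[\boldsymbol{k_0}^T](\tilde{\boldsymbol{\theta}}-\boldsymbol{\theta_0}) + o_P(\cdot),$$
where $\bar{h}_1 = T^{-1}\sum_t h_1(\ve_t)$, so that $\hat\eta_f - \eta_f$ inherits the rate $\max\{T^{-1/2},\,\lambda_T/T\}$.

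Combining these expansions and letting the two $\bar{h}_1$ contributions cancel produces
$$\hat{\boldsymbol{\theta}}_{\boldsymbol{T}} - \boldsymbol{\theta_0} = \frac{\boldsymbol{M}^{-1}}{Eh_2}\cdot\frac{1}{T}\sum_t \big(\boldsymbol{k_0}-E[\boldsymbol{k_0}]\big) h_1(\ve_t) \;+\; \boldsymbol{M}^{-1} E[\boldsymbol{k_0}] E[\boldsymbol{k_0}^T] (\tilde{\boldsymbol{\theta}} - \boldsymbol{\theta_0}) + o_P(\cdot).$$
The key algebraic step is the identity $\boldsymbol{M}^{-1} E[\boldsymbol{k_0}] = \sigma_0 \boldsymbol{e_1}$, which I would establish by block inverting $\boldsymbol{M}$ along the scale/autoregression partition via the Schur complement (the calculation hinges on the invertibility of $\mathrm{Var}(\boldsymbol{z_0})$, where $\boldsymbol{z_0} = v_t^{-1}\partial v_t/\partial\boldsymbol{\gamma}|_{\boldsymbol{\gamma_0}}$). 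This collapses the slow-rate term to $\sigma_0 \boldsymbol{e_1}\cdot E[\boldsymbol{k_0}^T](\tilde{\boldsymbol{\theta}} - \boldsymbol{\theta_0})$, a scalar multiple of $\boldsymbol{e_1}$ that touches only the $\sigma$-coordinate. Reading off the $\boldsymbol{\gamma}$-block of $\boldsymbol{M}^{-1}$ (again via Schur complement) as $\boldsymbol{V}$, the $\boldsymbol{\gamma}$-block of the leading term is $(Eh_2)^{-1}\boldsymbol{V}\cdot T^{-1}\sum_t (\boldsymbol{z_0}-\bar{\boldsymbol{z}}) h_1(\ve_t)$. Since $\boldsymbol{z_0}$ is $\mathcal{F}_{t-1}$-measurable while $h_1(\ve_t)$ is mean zero and independent of $\mathcal{F}_{t-1}$ under strict stationarity and ergodicity, a martingale CLT delivers $T^{1/2}(\hat{\boldsymbol{\gamma}} - \boldsymbol{\gamma_0}) \Rightarrow N(\boldsymbol{0},\, Eh_1^2/(Eh_2)^2 \cdot \boldsymbol{V})$, while for $\hat\sigma_T$ the slow-rate second term dominates and yields the rate $T/\lambda_T$.

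The main obstacle will be controlling the second-order Taylor remainders uniformly on a neighborhood that shrinks only at rate $\lambda_T/T$ rather than $T^{-1/2}$. This requires uniform integrability of the second derivatives of $\log f$ on a neighborhood of $(\boldsymbol{\theta_0},\eta_f)$---covered by the standing regularity conditions on $f$---together with a uniform ergodic theorem to turn sample averages into expectations on these shrinking sets. One must also verify carefully that the cancellation of the $\bar{h}_1$ pieces between the two expansions survives up to terms that are genuinely $o_P$ of the relevant rates; the clean output $\boldsymbol{M}^{-1} E[\boldsymbol{k_0}] = \sigma_0 \boldsymbol{e_1}$ is what makes the $\boldsymbol{\gamma}$-block adaptive to the first step.
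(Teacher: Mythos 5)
Your proposal is correct and follows essentially the same route as the paper: the paper likewise stacks the three score equations into a rate-adjusted linear system, uses Lemma \ref{lem:M^-1} (your identity $\boldsymbol{M}^{-1}E[\boldsymbol{k_0}]=\sigma_0\boldsymbol{e_1}$) to show that the weights multiplying both the first-step error and the $\hat\eta_f$ equation are proportional to $\boldsymbol{e_1}$, hence contaminate only the $\sigma$-coordinate, and then applies the martingale CLT to the remaining $h_1$-driven term for $\boldsymbol{\gamma}$. Your treatment of the $\hat\eta_f$ expansion and of the Taylor remainders on the slowly shrinking neighborhood is in fact more explicit than the paper's rather terse argument.
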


Theorem \ref{thm:CLT3} applies to several estimators that have been discussed in the literature. For example, \cite{HY2003} have discussed the Gaussian QMLE with ultra heavy-tailed innovations that violate a finite fourth moment. In their analysis, $\lambda_T$ is regularly varying at infinity with exponent $\alpha\in [1,2)$. The resulting Gaussian QMLE $\tilde{\boldsymbol{\theta}}$ suffers lower convergence rates. By contrast, \cite{DrostKlaassen1997} have suggested an M-estimator based on the score function for logistic distribution to avoid moment conditions on the innovations. Both estimators, if applied in the first step, would not affect the efficiency of $\hat{\boldsymbol{\gamma}}_{\boldsymbol{T}}$.

\section{Extensions}
We discuss two ways to further improve the efficiency of 2SNG-QMLE. One is choosing the non-Gaussian likelihood from a pool of candidate likelihoods to adapt to data, the other is an affine combination of 2SNG-QMLE and Gaussian QMLE according their covariance matrix in Theorem \ref{thm:CLT2} to minimize resulting estimator's asymptotic variance.

\subsection{Optimal Choice of Likelihood}

There are two distinctive edges of choosing a heavy tailed quasi likelihood over Gaussian
likelihood. First, the $T^{1\over2}$-consistency of 2SNG-QMLE of $\boldsymbol{\gamma}$ no
longer depends on finite fourth moment condition, but instead finite $Eh_1^2/(Eh_2)^2$. This can be easily met by, for example, choosing generalized Gaussian likelihood with $\beta \leq 1$.
Second, even under finite fourth moment, heavy tailed 2SNG-QMLE has lower variance than Gaussian
QMLE if true innovation is heavy tailed. A pre-specified heavy tailed likelihood can have these
two advantages. However, we can adaptively choose this quasi likelihood to further improve its
efficiency. This is done by minimizing the efficiency loss from MLE, which is equivalent by minimizing $Eh_1^2/(Eh_2)^2$ over certain families of heavy tailed likelihoods. We propose optimal choice of non-Gaussian
likelihoods, where candidate likelihoods are from Student's $t$ family with degree of freedom $\nu>2$ and generalized Gaussian family with $\beta \leq 1$. Formally, for true innovation distribution $g$ and candidate
likelihood $f$, define
\begin{equation}
A(f,g) = {E_g h_1^2 \over E_g(h_2)^2}, \quad \mbox{where } h_1 = 1+ h(\frac{\varepsilon}{\eta_f}),\mbox{ and
} h_2 = \frac{\varepsilon}{\eta_f}\dot{h}(\frac{\varepsilon}{\eta_f})
\end{equation}
Then the optimal likelihood is chosen from $t$-family and generalized Gaussian family (gg):
\begin{equation} \label{eq:choose likelihood}
f^* = \mbox{argmin}_{\nu, \beta}\left\{ \{ A(f_{\nu}^t,\hat g)\}_{\nu >2},  \{
A(f_{\beta}^{gg},\hat g)\}_{\beta \leq 1}  \right\}
\end{equation}
where $\hat g$ denotes the empirical distribution of estimated residuals from Gaussian QMLE, the
first step. Because this procedure of choosing likelihood is adaptive to data, it is expected that
the chosen quasi likelihood results in a more efficient 2SNG-QMLE than a pre-specified one. We
justify this point in simulation studies.

A 2SNG-QMLE with choosing optimal likelihood runs the following four steps: (a) Run Gaussian QMLE
and obtain the estimated residuals; (b) Run optimization (\ref{eq:choose likelihood}) and obtain
the optimal likelihood $f^*$; (c) Obtain $\hat \eta_f$ using $f^*$ and estimated residuals; (d) Run
2SNG-QMLE with $f^*$ and $\hat \eta_f$.

\subsection{Aggregating 2SNG-QMLE and Gaussian QMLE}
Another way to further improve the efficiency of 2SNG-QMLE is through aggregation. Since both Gaussian QMLE and 2SNG-QMLE are consistent, an affine combination of the two, with weights chosen according to their joint asymptotic variance, yields a consistent estimator and is more efficient than both. Define the aggregation estimator
\begin{eqnarray}
\hat \Btheta _{\BT} ^ {\BW} = \BW \hat \Btheta + (\boldsymbol{I} - \BW) \tilde \Btheta
\end{eqnarray}
where $\BW$ is a diagonal matrix with weights $(w_1, w_2,\ldots, w_{1+p+q})$ on the diagonal. From Theorem \ref{thm:CLT2}, the optimal weights are chosen from minimizing the asymptotic variance of each component of the aggregation estimator:
\begin{eqnarray} \label{eq:aggre_general_weights}
w_i^* &=& \mbox{argmin}_w w^2 (\boldsymbol{\Sigma_2})_{i,i} + (1-w)^2 (\boldsymbol{\Sigma_G})_{i,i} + 2 w (1-w) \boldsymbol{\Xi}_{i,i} \\ \nonumber
      &=& {(\boldsymbol{\Sigma_G})_{i,i} - \boldsymbol{\Xi}_{i,i} \over (\boldsymbol{\Sigma_2})_{i,i} + (\boldsymbol{\Sigma_G})_{i,i} - 2 \boldsymbol{\Xi}_{i,i} }
\end{eqnarray}
It turns out that all optimal aggregation weights $w_i^*$ are the same, which is
\begin{eqnarray} \label{eq:aggre_weights}
w^* = {E\big({1 - \ve^2  \over 2}({1 - \ve^2  \over 2} + {h_1 \over Eh_2})\big)   \over E \big({1 - \ve^2  \over 2} + {h_1 \over Eh_2}\big)^2  }.
\end{eqnarray}
\begin{Proposition} \label{Prop:aggregation}
The aggregated estimator $\hat \Btheta_{\BT}^*$ uses optimal aggregation weights $\BW^* = w^*\boldsymbol{I}$. Its asymptotic variance has diagonal terms
\begin{eqnarray}
\boldsymbol{\Sigma}^*_{i,i} = { (\boldsymbol{\Sigma_2})_{i,i}(\boldsymbol{\Sigma_G})_{i,i} - \boldsymbol{\Xi}_{i,i}^2 \over (\boldsymbol{\Sigma_2})_{i,i} + (\boldsymbol{\Sigma_G})_{i,i} - 2 \boldsymbol{\Xi}_{i,i} }, \quad i = 1,\ldots,1+p+q.
\end{eqnarray}
\end{Proposition}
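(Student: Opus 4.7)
\textbf{Proof plan for Proposition \ref{Prop:aggregation}.}
The consistency and joint asymptotic normality of $(\tilde{\Btheta}_{\BT},\hat{\Btheta}_{\BT})$ are already in hand from Theorem \ref{thm:CLT2}, so any affine combination $\hat\Btheta_{\BT}^{\BW}=\BW\hat\Btheta_{\BT}+(\boldsymbol{I}-\BW)\tilde\Btheta_{\BT}$ with deterministic $\BW$ is consistent and satisfies a CLT with asymptotic variance having $i$-th diagonal entry
\[
 w_i^2 (\boldsymbol{\Sigma_2})_{i,i}+(1-w_i)^2(\boldsymbol{\Sigma_G})_{i,i}+2w_i(1-w_i)\boldsymbol{\Xi}_{i,i}.
\]
This is a strictly convex quadratic in $w_i$ (assuming $\tilde\Btheta_{\BT}$ and $\hat\Btheta_{\BT}$ are not perfectly correlated), whose unique minimizer is given by (\ref{eq:aggre_general_weights}). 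Substituting this minimizer back into the quadratic and simplifying yields the standard identity $\boldsymbol{\Sigma}^*_{i,i}=\big((\boldsymbol{\Sigma_2})_{i,i}(\boldsymbol{\Sigma_G})_{i,i}-\boldsymbol{\Xi}_{i,i}^2\big)/\big((\boldsymbol{\Sigma_2})_{i,i}+(\boldsymbol{\Sigma_G})_{i,i}-2\boldsymbol{\Xi}_{i,i}\big)$, which is the second claim of the proposition.

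The real content is showing that all $w_i^*$ coincide. I will exploit the common structural form of the three matrices from Theorem \ref{thm:CLT2}. Writing $A=(\ve^2-1)/2$ and $B=h_1/Eh_2$, each of $\boldsymbol{\Sigma_G}$, $\boldsymbol{\Sigma_2}$, $\boldsymbol{\Xi}$ decomposes as $\alpha\,\boldsymbol{M}^{-1}+\beta\,\sigma_0^2\,\boldsymbol{e_1}\boldsymbol{e_1}'$ with $(\alpha,\beta)=(EA^2,0)$, $(EB^2,EA^2-EB^2)$, and $(EAB,-2(EA^2-EAB))$ respectively; this rewriting uses $E((\ve^2-1)(B-A))=2(EAB-EA^2)$.

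For $i\neq 1$ the $\boldsymbol{e_1}\boldsymbol{e_1}'$ contribution vanishes, so both numerator and denominator of $w_i^*$ pick up a common factor $(\boldsymbol{M}^{-1})_{i,i}$, giving
\[
 w_i^*=\frac{EA^2-EAB}{EA^2-2EAB+EB^2}=\frac{E(A(A-B))}{E(A-B)^2},
\]
which matches (\ref{eq:aggre_weights}) after substituting $A=-(1-\ve^2)/2$. For $i=1$ the extra $\boldsymbol{e_1}\boldsymbol{e_1}'$ pieces do not vanish, and the main obstacle of the argument is to show that they cancel coherently in numerator and denominator. A direct computation gives
\[
 (\boldsymbol{\Sigma_G})_{1,1}-\boldsymbol{\Xi}_{1,1}=(EA^2-EAB)\bigl((\boldsymbol{M}^{-1})_{1,1}-\sigma_0^2\bigr),
\]
\[
 (\boldsymbol{\Sigma_2})_{1,1}+(\boldsymbol{\Sigma_G})_{1,1}-2\boldsymbol{\Xi}_{1,1}=E(A-B)^2\bigl((\boldsymbol{M}^{-1})_{1,1}-\sigma_0^2\bigr),
\]
so the common factor $(\boldsymbol{M}^{-1})_{1,1}-\sigma_0^2$ cancels and $w_1^*$ reduces to the same expression as in the $i\neq 1$ case. (The auxiliary Lemma \ref{lem:M^-1} from the appendix, already invoked in the efficiency comparison, guarantees $(\boldsymbol{M}^{-1})_{1,1}\neq\sigma_0^2$ in the non-degenerate case.)

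With the common $w^*$ identified, it only remains to plug it into the $i$-th diagonal variance formula; the identity $\boldsymbol{\Sigma}^*_{i,i}=\big((\boldsymbol{\Sigma_2})_{i,i}(\boldsymbol{\Sigma_G})_{i,i}-\boldsymbol{\Xi}_{i,i}^2\big)/\big((\boldsymbol{\Sigma_2})_{i,i}+(\boldsymbol{\Sigma_G})_{i,i}-2\boldsymbol{\Xi}_{i,i}\big)$ follows by the routine completion-of-square computation mentioned above. The main obstacle is therefore not analytic but algebraic bookkeeping: verifying the simultaneous factorization of numerator and denominator of $w_1^*$ that enables the $\sigma^2_0$ piece to drop out. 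Once that is done, the rest of the proposition is a direct substitution.
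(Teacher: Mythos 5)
Your proposal is correct and follows essentially the same route as the paper's proof: both express $\boldsymbol{\Sigma_G}$, $\boldsymbol{\Sigma_2}$ and $\boldsymbol{\Xi}$ in the common form $\alpha\,\boldsymbol{M}^{-1}+\beta\,\sigma_0^2\,\boldsymbol{e_1}\boldsymbol{e_1}'$ (the paper does this through the block form of $\boldsymbol{M}^{-1}$ in Lemma \ref{lem:M^-1}), cancel the common factor $\sigma_0^2\bar{\boldsymbol{y}}_{\boldsymbol{0}}'\boldsymbol{V}\bar{\boldsymbol{y}}_{\boldsymbol{0}}=(\boldsymbol{M}^{-1})_{1,1}-\sigma_0^2$ for $i=1$ and $\boldsymbol{V}_{j,j}$ for $i\neq1$, and conclude that every $w_i^*$ equals $E\big(A(A-B)\big)/E(A-B)^2$, after which the stated variance is the routine minimized quadratic. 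One minor slip: the coefficient of $\sigma_0^2\boldsymbol{e_1}\boldsymbol{e_1}'$ in $\boldsymbol{\Xi}$ is $EA^2-EAB=-\tfrac{1}{2}E\big((\ve^2-1)(B-A)\big)$, not $-2(EA^2-EAB)$ — you dropped the factor $-\tfrac{1}{2}$ — but your two displayed factorizations already use the correct value, so nothing downstream is affected.
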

Although estimators for $\sigma$ and $\gamma$ have different asymptotic properties, the optimal aggregation weights are the same: $w_i^* = w^*$. Also the weight depends only on non-Gaussian likelihood and innovation distribution, but not on GARCH model specification. The aggregated estimator $\hat \Btheta_{\BT}^*$ always have smaller asymptotic variance than both 2SNG-QMLE and Gaussian QMLE. If data is heavy tailed, e.g., $E\ve^4$ is large or equal to $\infty$, from (\ref{eq:aggre_weights}) it simply assigns weights approximately 1 for 2SNG-QMLE and 0 for Gaussian QMLE. In practice, after running 2SNG-QMLE with optimal choice of likelihood, we can estimate the optimal aggregation weight $w^*$ by plugging into (\ref{eq:aggre_weights}) the estimated residuals.

\section{Simulation Studies}
\subsection{Model Free Characteristics}
The scale tuning parameter $\eta_f$ and the efficiency difference $\mu$ are generic characteristics of non-Gaussian likelihoods and of the true innovations, and they do not change when using another conditional heteroscedastic model. We numerically evaluate how they vary according to the non-Gaussian likelihoods and innovations.

\begin{table}[h!]
\begin{center}
\caption{$\eta_f$ for generalized Gaussian QMLEs ($gg$,row) and innovation distributions (column)} \label{tab:eta_f table GGD}
\begin{tabular}{rrrrrrr|rrrr}
\hline
\hline
& $gg_{0.2}$ & $gg_{0.6}$ & $gg_{1}$ & $gg_{1.4}$ & $gg_{1.8}$ & $gg_2$ & $t_3$ & $t_5$ & $t_7$ & $t_{11}$ \\
 \hline
 $gg_{0.2}$ & 1.000 & 6.237 & 8.901 & 10.299 & 11.125 & 11.416 & 8.128 & 9.963 & 10.483 & 10.885 \\
 $gg_{0.6}$ & 0.271 & 1.000 & 1.291 & 1.434 & 1.515 & 1.544 & 1.159 & 1.384 & 1.443 & 1.487 \\
 $gg_{1.0}$ & 0.354 & 0.844 & 1.000 & 1.073 & 1.114 & 1.128 & 0.900 & 1.040 & 1.074 & 1.098 \\
 $gg_{1.4}$ & 0.537 & 0.873 & 0.962 & 1.000 & 1.022 & 1.029 & 0.883 & 0.977 & 0.998 & 1.012 \\
 $gg_{1.8}$ & 0.811 & 0.952 & 0.981 & 0.993 & 1.000 & 1.002 & 0.946 & 0.985 & 0.991 & 0.997 \\
 \hline
 \end{tabular}
\end{center}
\end{table}

\begin{table}[h!]
\begin{center}
\caption{$\eta_f$ for Student's t QMLEs (row) and innovation distributions (column)} \label{tab:eta_f table Student_t}
\begin{tabular}{rrrrrrr|rrrr}
\hline
\hline
 & $t_{2.5}$ & $t_3$ & $t_4$ & $t_5$ & $t_7$ & $t_{11}$   & $gg_{0.5}$ & $gg_{1}$ & $gg_{1.5}$ & $gg_2$ \\
 \hline
$t_{2.5}$ & 1.000 & 1.231 & 1.425 & 1.506 & 1.584 & 1.641 & 0.900 & 1.414 & 1.614 & 1.716 \\
$t_{3}$ &   0.815 & 1.000 & 1.151 & 1.216 & 1.275 & 1.318 & 0.756 & 1.150 & 1.301 & 1.375 \\
$t_{4}$ &   0.715 & 0.874 & 1.000 & 1.054 & 1.100 & 1.133 & 0.697 & 1.011 & 1.122 & 1.174 \\
$t_{5}$ &   0.690 & 0.836 & 0.953 & 1.000 & 1.043 & 1.071 & 0.691 & 0.966 & 1.061 & 1.107 \\
$t_{7}$ &   0.679 & 0.816 & 0.922 & 0.964 & 1.000 & 1.024 & 0.708 & 0.945 & 1.018 & 1.053 \\
$t_{11}$  & 0.690 & 0.823 & 0.916 & 0.953 & 0.980 & 1.000 & 0.749 & 0.941 & 0.998 & 1.021 \\
$t_{20}$  & 0.720 & 0.845 & 0.928 & 0.958 & 0.981 & 0.992 & 0.811 & 0.954 & 0.992 & 1.007 \\
$t_{30}$  & 0.742 & 0.862 & 0.939 & 0.965 & 0.981 & 0.992 & 0.846 & 0.966 & 0.993 & 1.004 \\
 \hline
 \end{tabular}
\end{center}
\end{table}

Table \ref{tab:eta_f table GGD} and \ref{tab:eta_f table Student_t} show how $\eta_f$ varies over generalized Gaussian likelihoods and Student's $t$ likelihoods with different parameters respectively. For each row, which amounts to fixing a quasi likelihood, the lighter the tails of innovation errors are, the larger $\eta_f$. Furthermore $\eta_f>1$ for innovation errors that are lighter than the likelihood, and $\eta_f<1$ for innovations that are heavier than the likelihood. Therefore if the non-Gaussian likelihood have heavier tails than true innovation, we should shrink the data in order for consistent estimation. On the other hand if the quasi likelihood is lighter than true innovation, we should magnify the data.

For each column (fix an innovation distribution), in most cases the heavier the tails of likelihoods are, the larger $\eta_f$, but the monotone relationship is not true for some ultra heavy tail innovations, in which cases $\eta_f$ shows a ``smile" dynamic. The non-monotonicity in the likelihood dimension indicates that to determine $\eta_f$ one needs more information about the likelihood than just the asymptotic behavior of its tails.

 Table \ref{tab:mu table GGD} and \ref{tab:mu table Student_t} show the dependence of $\mu$ on the true innovation (column) and non-Gaussian likelihood (row). From the table we see that in most cases $\mu$ is positive, which means that non-Gaussian QMLE shows an improvement. But when heavy tailed likelihoods are applied on true innovations with moderate or thin tails, $\mu$ turns negative, which means that Gaussian QMLE performs better.

Looking at each column, by fixing the innovation distribution, non-Gaussian QMLE performs the best over Gaussian QMLE when the non-Gaussian likelihood coincides with the innovation distribution (MLE). Looking at each row, by fixing a non-Gaussian likelihood, its relative performance increases when the true innovation distributions become more heavy tailed, even after passing the MLE point where true innovation and likelihood are the same. This is because $\mu$ is a relative measure of non-Gaussian over Gaussian, not an absolute measure for asymptotic variance. When the true innovation is heavier than the non-Gaussian likelihood, non-Gaussian QMLE does not perform as well as MLE, but Gaussian QMLE does even worse than as if the true innovation coincides with non-Gaussian likelihood. Therefore, even the absolute efficiency in terms of asymptotic variance drops for non-Gaussian QMLE, its relative performance over Gaussian QMLE actually increases.

To summarize the variation of $\mu$, one can draw a line for distributions according to their asymptotic behavior of tails, in other words, according to how heavy their tails are, with thin tails on the left and heavy tails on the right. Then we place non-Gaussian likelihood, Gaussian likelihood and true innovation distribution onto this line. The sign and value of $\mu$ depends on where true innovation distribution is placed. (a) It is placed on the right side of non-Gaussian likelihood, then $\mu$ is positive and large. (b) Error is on the left side of Gaussian, then $\mu$ is negative and large in absolute value. (c) Error is between non-Gaussian and Gaussian, then, to which likelihood is innovation closer determines $\mu$. This seems like a symmetric argument for Gaussian and non-Gaussian likelihood. But in financial applications we know true innovations are heavy tailed. Even the non-Gaussian likelihood may not be the innovation distribution, we still can guarantee either (a) happens or (c) happens with innovation closer to non-Gaussian likelihood. In both cases we have $\mu>0$ and non-Gaussian QMLE is a more efficient procedure than Gaussian QMLE.

\begin{table}[h!]
\begin{center}
\caption{$\mu$ for generalized Gaussian QMLEs ($gg$,row) and innovation distributions (column)} \label{tab:mu table GGD}
\begin{tabular}{rrrrrrr|rrrr}
\hline
\hline
 & $gg_{0.2}$ & $gg_{0.6}$ & $gg_{1}$ & $gg_{1.4}$ & $gg_{1.8}$ & $gg_2$ & $t_{4.5}$ & $t_5$ & $t_7$ & $t_{11}$ \\
 \hline
 $gg_{0.2}$ & 484 & 1.773 &-0.062 &-0.335 &-0.416 &-0.436 & 2.411 & 0.929 &-0.026 &-0.274 \\
 $gg_{0.6}$ & 482 & 1.978 & 0.195 &-0.075 &-0.157 &-0.178 & 2.608 & 1.138 & 0.206 &-0.030 \\
 $gg_{1.0}$ & 474 & 1.839 & 0.250 & 0.017 &-0.053 &-0.071 & 2.590 & 1.149 & 0.267 & 0.054 \\
 $gg_{1.4}$ & 443 & 1.424 & 0.209 & 0.040 &-0.010 &-0.022 & 2.369 & 1.008 & 0.234 & 0.068 \\
 $gg_{1.8}$ & 328 & 0.589 & 0.089 & 0.022 & 0.003 &-0.002 & 1.588 & 0.596 & 0.114 & 0.032 \\
 \hline
 \end{tabular}
\end{center}
\end{table}

\begin{table}[h!]
\begin{center}
\caption{$\mu$ for Student's t QMLEs (row) and innovation distributions (column)} \label{tab:mu table Student_t}
\begin{tabular}{rrrrrrr|rrrr}
\hline
\hline
 & $t_{4.5}$ & $t_5$ & $t_7$ & $t_9$ & $t_{15}$ & $t_{30}$   & $gg_{0.5}$ & $gg_{1}$ & $gg_{1.5}$ & $gg_2$ \\
 \hline
$t_{2.5}$ & 2.534 & 1.045 & 0.071 & -0.114 & -0.263 & -0.324 & 3.848 & 0.004 & -0.296 & -0.375 \\
$t_{3}$ &   2.626 & 1.145 & 0.189 & 0.011 &-0.124 &-0.183 & 3.871 & 0.124 & -0.158 & -0.223 \\
$t_{4}$ &   2.663 & 1.194 & 0.258 & 0.086 &-0.038 &-0.090 & 3.816 & 0.191 &-0.067 &-0.124 \\
$t_{5}$ &   2.664 & 1.200 & 0.277 & 0.114 &-0.004 &-0.054 & 3.770 & 0.211 &-0.031 &-0.084 \\
$t_{7}$ &   2.642 & 1.190 & 0.287 & 0.131 & 0.020 &-0.022 & 3.667 & 0.222 &-0.001 &-0.051 \\
$t_{11}$  & 2.591 & 1.150 & 0.277 & 0.132 & 0.035 &-0.004 & 3.500 & 0.212 & 0.016 &-0.025 \\
 \hline
 \end{tabular}
\end{center}
\end{table}

\subsection{Verification of the Asymptotic Theory}
Now we verify the asymptotic formula (\ref{eq:avar_hat theta})-(\ref{eq:avar_eta}). We run $N = 20000$ simulations, each generating a sample of size $T = 7000$ from a GARCH$(1,1)$ model. The model parameters are $\sigma_0 = 0.5$, $a_{10} = 0.35$, $b_{10} = 0.3$. The innovation errors are standardized skewed Student's $t$ distribution with degree of freedom $\nu_0 = 7$ and degree of skewness $\lambda_0 = 0.5$, so that the left tail is heavier than the right tail. We use Student's $t$ likelihood with degree of freedom $\nu = 4$ in non-Gaussian QMLE. We run two-step procedure to obtain the estimates $\hat \eta_f$ and non-Gaussian QMLE estimates $\hat \theta$. Figure \ref{figure:GARCH model verification} reports the standardized estimates of $(\sigma_0, a_{10}, b_{10}, \eta_f)$ compared to $N(0,1)$. Standardization is done by first subtracting the estimates by the true value, and then dividing by the theoretical asymptotic standard deviation according to Theorem \ref{thm:CLT2}. All plots confirm the validity of asymptotic variance formula (\ref{eq:avar_hat theta})-(\ref{eq:avar_eta}).

\begin{figure}[!h]
\begin{center}
   \includegraphics[scale=0.8]{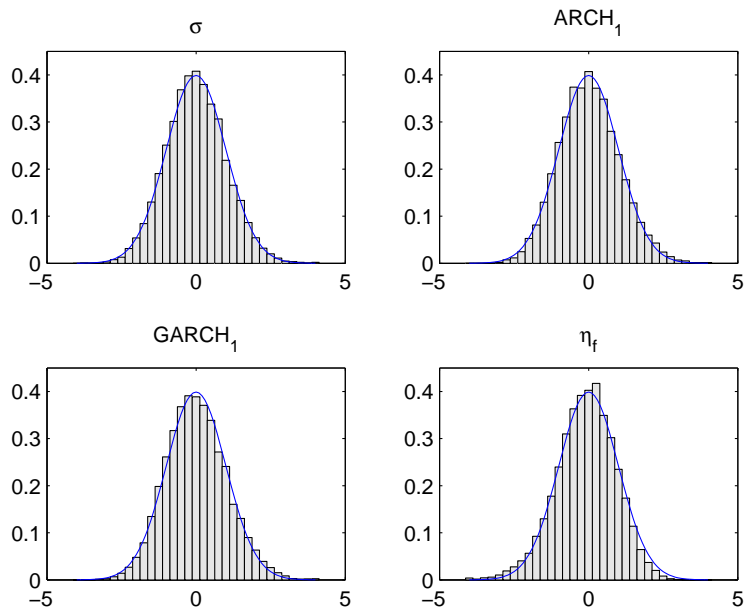}
     \caption{\small \textnormal{Histogram of standardized 2SNG-QMLE and standard normal pdf (solid line). Normalization is done by first subtracting the estimates by the true value, and then dividing by the theoretical asymptotic standard deviation suggested by our theory.}}
   \label{figure:GARCH model verification}
\hfill
\end{center}
\end{figure}

\subsection{Comparison with Gaussian QMLE and MLE}
We compare the efficiency of 2SNG-QMLE, Gaussian QMLE and MLE under various innovation error distributions. We don't perform optimal choice of quasi likelihood in 2SNG-QMLE, instead fix the quasi likelihood to be Student's $t$ distribution with degree of freedom 4. The simulation is conducted on a GARCH$(1,1)$ model with true parameters $(\sigma_0,a_{1,0},b_{1,0})=(0.5,0.35,0.3)$. For innovation errors we use Student's $t$ and generalized Gaussian distributions of various degrees of freedoms to generate data. For each type of innovation distribution, we run $N=1000$ simulations each with $T=3000$ samples. Tables \ref{tab:efficiency_compare_wMLE_t} and \ref{tab:efficiency_compare_wMLE_gg} reports the relative efficiencies of these three estimators in terms of ratios of sample variances and MSEs. The first ratio, Gaussian/2SNG, indicates how 2SNG-QMLE outperforms (underperforms) Gaussian QMLE. The second ratio 2SNG/MLE indicates how far 2SNG-QMLE is from efficiency bound.
\begin{table}[h!]
\begin{center}
 \caption{Student's t innovations simulation}\label{tab:efficiency_compare_wMLE_t}
\begin{tabular}{rrrrrrrr}
\hline
\hline
Innov. & Comparing & \multicolumn{3}{l}{Ratio of variances}  & \multicolumn{3}{l}{Ratio of MSEs} \\
dist. & methods & $\sigma_0$ & $a_{1,0}$ & $b_{1,0}$ & $\sigma_0$ & $a_{1,0}$ & $b_{1,0}$ \\
 \hline
 $t_{20}$ & G./2SNG  &   0.929  &  0.901   &  0.936  &  0.929  &  0.898 &   0.936 \\\vspace{0.12cm}
          & 2SNG/MLE &  1.092  & 1.122    & 1.089   & 1.091   & 1.126  &  1.089 \\
 $t_{15}$ & G./2SNG  &   0.942  &  0.960  &  0.961  &  0.939  &  0.948  &  0.960 \\        \vspace{0.12cm}
          & 2SNG/MLE&   1.112  &  1.121  &  1.087  &  1.114  &  1.131  &  1.087 \\ \vspace{0.12cm}

  $t_{9}$ & G./2SNG  &    1.115&    1.186&    1.108&    1.118&    1.185&    1.109\\        \vspace{0.12cm}
          & 2SNG/MLE&    1.109&    1.022&    1.020&    1.019&    1.023&    1.020\\
  $t_{7}$ & G./2SNG  &    1.216&    1.260&    1.186&    1.217&    1.266&    1.186\\        \vspace{0.12cm}
           & 2SNG/MLE&   1.036&    1.024&    1.031&    1.037&    1.026&    1.031\\

  $t_{6}$ & G./2SNG  &     1.355&    1.528&    1.302&    1.355&    1.552&    1.303\\        \vspace{0.12cm}
          & 2SNG/MLE&   1.&    1.022&    1.&    1.&    1.022&    1.\\

  $t_{5}$ & G./2SNG  &    1.526&    2.495&    1.405&    1.547&    2.530&    1.409\\        \vspace{0.12cm}
        & 2SNG/MLE&   1.025&    1.001&    1.015&    1.027&    1.001&    1.015\\

  $t_{4}$ & G./2SNG  &    2.074&    7.244&    1.847&    2.125&    7.478&    1.858\\        \vspace{0.12cm}
        & 2SNG/MLE&    1.065&   1.&    1.&    1.071&    1.&    1.\\

  $t_{3}$ & G./2SNG  &    2.687&    31.40&    2.535&    2.850&    33.26&    2.580\\
        & 2SNG/MLE&     1.235&    1.&    1.&    1.264&    1.&    1.\\
  $t_{2.5}$ & G./2SNG  &    1.960&    93.91&    2.649&    2.051&    101.5&    2.664\\        \vspace{0.12cm}
        & 2SNG/MLE&     2.371&    1.037&    1.062&    2.625&    1.037&    1.062\\

\hline
 \end{tabular}

\end{center}
\end{table}

In Table \ref{tab:efficiency_compare_wMLE_t} the innovation distributions range from thin-tailed $t_{20}$ (approximately Gaussian) to heavy-tailed $t_{2.5}$. Biases are small so standard deviations and RMSEs are nearly the same. For the first two thin-tailed cases, $t_{20}$ and $t_{15}$, Gaussian QMLE outperforms 2SNG-QMLE by a small margin. For all other cases 2SNG-QMLE outperforms Gaussian QMLE. In heavy tailed cases $t_6$ and $t_5$, 2SNG-QMLE performs nearly as well as MLE, and reduces standard deviations by 15\% to 60\% from Gaussian QMLE. In ultra-heavy tail cases ($t_4$, $t_3$ and $t_{2.5}$), since fourth moment no longer exists, Gaussian QMLE is not $T^{1\over2}$-consistent, and its estimation precision quickly deteriorates, sometimes to an intolerable level. In contrast 2SNQ-QMLE using $t_4$ likelihood does not require finite fourth moment for $T^{1\over2}$-consistent $a_{1,0}$ and $b_{1,0}$, so standard deviations for $a_{1,0}$ and $b_{1,0}$ are still nearly equal to MLE. Standard deviations of $\sigma_0$ are now larger than MLE, but still significantly smaller than Gaussian QMLE.

\begin{table}[h!]
\begin{center}
 \caption{generalized Gaussian innovations simulation}\label{tab:efficiency_compare_wMLE_gg}
\begin{tabular}{rrrrrrrr}
\hline
\hline
Innov. & Comparing & \multicolumn{3}{l}{Ratio of variances}  & \multicolumn{3}{l}{Ratio of MSEs} \\
dist. & methods & $\sigma_0$ & $a_{1,0}$ & $b_{1,0}$ & $\sigma_0$ & $a_{1,0}$ & $b_{1,0}$ \\
 \hline
 $gg_{4}$ & G./2SNG  &   0.743  &  0.742   &  0.769  &  0.748  &  0.736 &   0.771\\
 \vspace{0.12cm}
          & 2SNG/MLE &  1.705  & 1.843    & 1.571   & 1.696   & 1.886  &  1.566 \\
 $Gauss.$ & G./2SNG  &   0.811  &  0.717  &  0.850  &  0.808  &  0.706  &  0.850 \\        \vspace{0.12cm}
          & 2SNG/MLE&   1.233  &  1.395  &  1.176  &  1.238  &  1.416  &  1.176 \\ \vspace{0.12cm}

  $gg_{1.2}$ & G./2SNG  &    1.045&    1.007&    1.019&    1.047&    1.006&    1.016\\        \vspace{0.12cm}
          & 2SNG/MLE&    1.076&    1.113&    1.070&    1.076&    1.117&    1.071\\
  $gg_{1}$ & G./2SNG  &    1.091&    1.210&    1.073&    1.090&    1.201&    1.073\\        \vspace{0.12cm}
           & 2SNG/MLE&   1.084&    1.120&    1.074&    1.086&    1.130&    1.074\\

  $gg_{0.8}$ & G./2SNG  &     1.258&    1.736&    1.237&    1.239&    1.689&    1.229\\        \vspace{0.12cm}
          & 2SNG/MLE&   1.082.&    1.022&    1.044&    1.096&    1.068&    1.048\\

  $gg_{0.6}$ & G./2SNG  &    1.653&    2.623&    1.526&    1.663&    2.650&    1.527\\        \vspace{0.12cm}
        & 2SNG/MLE&   1.089&    1.135&    1.061&    1.100&    1.144&    1.061\\

  $gg_{0.4}$ & G./2SNG  &    1.951&    4.619&    1.772&    1.958&    4.760&    1.764\\        \vspace{0.12cm}
        & 2SNG/MLE&    1.170&   1.204&    1.095&    1.191&    1.210&    1.098\\

\hline
 \end{tabular}

\end{center}
\end{table}

In Table \ref{tab:efficiency_compare_wMLE_gg}, the innovations innovations range from thin tailed $gg_4$ to heavy tailed $gg_{0.4}$. For innovation with $gg_{1.2}$ and heavier, 2SNG-QMLE starts to outperform Gaussian QMLE. In all cases, the Student $t_4$ 2SNG-QMLE performs very close to MLE as indicated by standard deviations. In comparison, Gaussian QMLE's performance deteriorates as tails grow heavier, particulary in $gg_{0.6}$ and $gg_{0.4}$, although in these cases the fourth moments are finite.

\subsection{Ultra-Heavy Tail Case}
Here we compare the efficiency when innovation are transformations from stable-$\alpha$ distributions. Index $\alpha$ ranges from 1.9 down to 1.1, and the distributions are transformed such that they do not have fourth moments but have 2nd moment to be unity. Furthermore they are asymmetric and not unimodal. Since distribution functions are not explicit, MLE is difficult to obtain. Table \ref{tab:efficiency_compare_stable_opt} compares the performance between 2SNG-QMLE with optimally chosen quasi likelihood and Gaussian QMLE. We still use the GARCH(1,1) model with true parameters $(\sigma_0,a_{1,0},b_{1,0})=(0.5,0.35,0.3)$, and run $N=2000$ simulation with $T=3000$. The candidate quasi likelihoods are Student's $t$ distributions with DoF from $20$ to $2.5$, and generalized Gaussian distributions with DoF from $4$ to $0.4$.

\begin{table}[h!]
\begin{center}
 \caption{Stable innovations simulation}\label{tab:efficiency_compare_stable_opt}
\begin{tabular}{rrrrrrrr}
\hline
\hline
Innov. & Comparing & \multicolumn{3}{l}{Ratio of variances}  & \multicolumn{3}{l}{Ratio of MSEs} \\
dist. & methods & $\sigma_0$ & $a_{1,0}$ & $b_{1,0}$ & $\sigma_0$ & $a_{1,0}$ & $b_{1,0}$ \\
 \hline \vspace{0.12cm}
$\alpha=1.9$ & G./NG-opt &    1.266&    1.446&    1.205&    1.285&    1.470&    1.215\\ \vspace{0.12cm}

$\alpha=1.7$ & G./NG-opt &    2.502&    5.072&    2.175&    2.551&    5.301&    2.178\\ \vspace{0.12cm}

$\alpha=1.5$ & G./NG-opt &    5.381&    148.9&    4.004&    5.605&    154.8&    3.954\\ \vspace{0.12cm}

$\alpha=1.3$ & G./NG-opt &    9.774&    499.1&    6.911&    10.10&    524.5&    6.868\\ \vspace{0.12cm}

$\alpha=1.1$ & G./NG-opt &    16.08&    1313&    10.19&    16.94&    1445&    9.960\\
\hline
 \end{tabular}

\end{center}
\end{table}

Gaussian QMLE deteriorates as tails grow heavier (smaller $\alpha$). In particular for $a_{1,0}$, it produces many large estimates, making substantial biases upward and intolerable standard deviation levels. In contrast, 2SNG-QMLE shows little sample bias and small standard deviations. It also shows that as innovations grows heavier, 2SNG-QMLE delivers smaller standard deviations.

For $\alpha=1.9$ case, among 2000 simulations, the algorithm chooses Student's t quasi likelihoods for 1977 times, and Gaussian likelihoods 23 times. Among the chosen Student's t likelihoods, the degrees of freedom spread out from 5 to 20, and mostly concentrate on 6, 7, 9 and 12. For the rest four cases, all chosen quasi likelihoods are Student's t likelihoods. In case $\alpha = 1.7$, the degrees of freedom concentrate on 3 and 4, with a small fraction of 5. In $\alpha=1.5$, around 1650 simulations choose $t_{2.5}$, the rest choose $t_3$. In $\alpha = 1.3$ and $\alpha = 1.1$, all chosen quasi likelihoods are $t_{2.5}$, the most heavy tailed candidate.

\section{Empirical Work}
Work run a simple GARCH$(1,1)$ model on Citigroup stock daily return from January 03, 2008 to Jan 15, 2010. There are 514 trading days in the data. We report the estimated parameters using old parametrization. The Gaussian QMLE estimates for $(c,\tilde a, \tilde b)$ is $(0.6522, 0.2205, 0.7793)$. Clearly data shows high degree of persistence in that $\tilde a +\tilde b \approx 1$. The 2SNG-QMLE chooses $gg_{1.2}$ as optimal likelihood, and the estimates for model parameters and $\eta_f$ are $(0.7689, 0.2075, 0.7728)$ and 1.0458, respectively. Since $\hat \eta_f$ deviates from 1 about 4.6\%, there would be a significant bias if we run $gg_{1.2}$ QMLE without scale adjustment.

On the other hand, even a non-Gaussian QMLE allowing to estimate shape of quasi likelihood cannot guarantee consistency. In fact, such method only picks one likelihood in some distribution family that is \``least" biased for the data, but bias due to misspecification of innovation distribution remains. We perform unscaled generalized Gaussian QMLE with shape estimation. The estimated shape is $\hat \beta = 1.305$, which is close to the 1.2, the shape of optimal likelihood in 2SNG-QMLE. We fix shape 1.305 and run 2SNG-QMLE again, $\hat \eta_f$ is still 1.033. This means even allowing to estimate the shape in quasi-likelihood, unscaled non-Gaussian QMLE still incurs a 3.3\% bias.

\section{Conclusion}
This paper regards on GARCH model estimation when innovation distribution is unknown, and it questions the efficiency issue of Gaussian QMLE and consistency issue of currently used non-Gaussian QMLE. It proposed the 2SNG-QMLE to tackle both issues. The first step runs a Gaussian QMLE whose purpose is to identify the scale tuning parameter, $\eta_f$. The second step runs a non-Gaussian QMLE to estimate model parameters. The quasi likelihood $f$ used in second step can be a pre-specified heavy tailed likelihoods, properly scaled by $\eta_f$. It can also be chosen from a pool of candidate distributions in order to adapt different characteristics of unknown innovation distribution.

The asymptotic theory of 2SNG-QMLE does not depend on any symmetric or unimodal assumptions of innovations. By adopting a different parametrization proposed by \cite{Newey1997}, and incorporating $\eta_f$, 2SNG-QMLE improves the estimation efficiency from Gaussian QMLE. We and show that the asymptotic behavior of 2SNG-QMLE can be broken down to two parts. For the heteroscedastic parameters $\gamma$, 2SNG-QMLE is always $T^{1\over2}$-consistent and asymptotically normal, whereas $T^{1\over2}$ consistency of Gaussian QMLE relies on finite fourth moment assumption. When $E\varepsilon_t^4 < \infty$, 2SNG-QMLE outperforms Gaussian QMLE in term of smaller asymptotic variance, provided that innovation distribution is reasonably heavy tailed, which is common for financial data. For the scale part $\sigma$, 2SNG-QMLE is not always $T^{1\over2}$-consistent, but simulation shows that the estimation for $\sigma$ is usually equally well as heteroscedastic parameters, $\gamma$. We also run simulation to compare the performance of Gaussian QMLE, 2SNG-QMLE and MLE. In most cases 2SNG-QMLE shows an edge and is close to MLE.

One possible generalization of 2SNG-QMLE is to linearly combine candidate quasi likelihoods in the second step. Instead of choosing a single likelihood, the log-likelihood objective in the second step is a weighted average of candidate log-likelihoods. The weights are chosen adaptively to optimize the asymptotic variance. By such combination efficiency, it will cover more dimensions of innovation distributions, and the efficiency will be further improved.

\appendix

\section{Appendix section}\label{app}
\subsection{Proof of Lemma \ref{lem:identification}}
\begin{proof}
\begin{eqnarray*}
E_t(l_t(\boldsymbol{\theta}))&=&Q(\frac{\eta_f\sigma
v_t(\boldsymbol{\gamma})}{\sigma_0 v_t(\boldsymbol{\gamma_0})})-\log \sigma_0
v_t(\boldsymbol{\gamma_0})+\log\eta_f\\&\leq&
Q(\eta_f)-\log \sigma_0
v_t(\boldsymbol{\gamma_0})+\log\eta_f\\&=&E_t(l_t(\boldsymbol{\theta_0}))
\end{eqnarray*}
By Assumption \ref{ASS1}, the inequality holds with positive
probability. Therefore, by iterated expectations,
$\bar{L}_T(\boldsymbol{\theta})<\bar{L}_T(\boldsymbol{\theta_0})$.
\end{proof}

\subsection{Proof of Lemma \ref{lem:primitive}}
\begin{proof}
Given regularity conditions, we have
$\dot{Q}(\eta)=-\frac{1}{\eta}E(1+h(\frac{\varepsilon}{\eta}))$. Denote $H(\eta)=E(1+h(\frac{\varepsilon}{\eta}))$.
$\ddot{Q}(\eta)=\frac{1}{\eta^2}H(\eta)-\frac{1}{\eta}\dot{H}(\eta)$, where
$\dot{H}(\eta)=-\frac{1}{\eta^2}E(\varepsilon
\dot{h}(\frac{\varepsilon}{\eta}))$, for any $\eta>0$. Because
$E(\varepsilon \dot{h}(\frac{\varepsilon}{\eta}))<0$, so $\dot{H}(\eta)>0$.
Next, $\lim_{\eta\rightarrow +\infty}H(\eta)=1$, since
\begin{eqnarray*}
\lim_{\eta\rightarrow +\infty}|H(\eta)-1|=\lim_{\eta\rightarrow
+\infty}|E(h(\frac{\varepsilon}{\eta}))|\leq \lim_{\eta\rightarrow
+\infty}\frac{E|\varepsilon|^p}{\eta^p}\rightarrow 0
\end{eqnarray*}
On the other hand, by Fatou's lemma along with 1 and 4, we have
\begin{eqnarray*}
\limsup_{\eta\rightarrow 0+}H(\eta)=\limsup_{\eta\rightarrow
0+}E(1+h(\frac{\varepsilon}{\eta}))\leq1+E(\limsup_{\eta\rightarrow
0+}h(\frac{\varepsilon}{\eta}))<0
\end{eqnarray*}
then $\lim_{\eta\rightarrow +\infty}H(\eta)=1$,
$\limsup_{\eta\rightarrow 0+}H(\eta)<0$, and $\dot{H}(\eta)>0$. Hence,
there exists a unique constant $\eta_f\in(0, \infty)$ such that
$H(\eta_f)=0$, hence $\dot{Q}(\eta_f)=0$, and $\ddot{Q}(\eta_f)<0$. This
concludes the proof.
\end{proof}

\subsection{Proof of Theorem \ref{thm:Con1}}
\begin{proof} The proof is similar to \cite{Elie_Jeantheau_1995} by verifying the conditions given in \cite{Pfanzagl1969}.
\end{proof}

\subsection{Proof of Theorem \ref{thm:CLT1}}
\begin{proof}
Let $\rho_t(\boldsymbol{\theta})=(\sigma v_t)^{-2}\boldsymbol{k}$ and $\sigma_t(\boldsymbol{\theta})=\sigma v_t(\boldsymbol{\gamma})$. Define the vector-valued function $\boldsymbol{\psi}$ as
\begin{eqnarray}\label{samplescore}
\boldsymbol{\psi}(\omega,\boldsymbol{\theta})&=&\frac{\partial L_T}{\partial \boldsymbol{\theta}}=-\frac{1}{T}\sum_{t=1}^T(1+\frac{\dot{f}(\frac{x_t}{\eta_f\sigma_t})}{f(\frac{x_t}{\eta_f\sigma_t})}\frac{x_t}{\eta_f\sigma_t})\boldsymbol{k}\nonumber
\end{eqnarray}
For convenience, we consider the parameters ranging within a local neighborhood of the true values as in \cite{HY2003}. This simplification may not be critical, given that the estimator is proved to be consistent. By Taylor expansion,

\begin{eqnarray}
\sigma_t(\boldsymbol{\theta})^2&=&\sigma_t(\boldsymbol{\theta_0})^2+\boldsymbol{A_t}(\boldsymbol{\theta_0})'(\boldsymbol{\theta}-\boldsymbol{\theta_0})+\parallel\boldsymbol{\theta}-\boldsymbol{\theta_0}
\parallel^2\boldsymbol{R_{1t}}(\boldsymbol{\theta})\sigma_t(\boldsymbol{\theta_0})^2\\
\boldsymbol{\rho_t}(\boldsymbol{\theta})&=&\boldsymbol{\rho_t}(\boldsymbol{\theta_0})+\boldsymbol{B_t}(\boldsymbol{\theta_0})(\boldsymbol{\theta}-\boldsymbol{\theta_0})+\parallel\boldsymbol{\theta}-
\boldsymbol{\theta_0}\parallel^2\boldsymbol{R_{2t}}(\boldsymbol{\theta})\sigma_t(\boldsymbol{\theta_0})^{-2}
\end{eqnarray}
where $\boldsymbol{R_{1t}}(\boldsymbol{\theta})$ and $\boldsymbol{R_{2t}}(\boldsymbol{\theta})$ are an $r$-vector and
$r\times r$ matrix, and $r=1+p+q$.

On the other hand,
\begin{eqnarray}
&&h(\frac{x_t}{\eta_f\sigma_t})=h(\frac{\varepsilon_t\sigma_t(\boldsymbol{\theta_0})}{\eta_f\sigma_t(\boldsymbol{\theta})})\nonumber\\
&=&h(\frac{\varepsilon_t}{\eta_f})-\frac{\varepsilon_t}{\eta_f}\dot{h}(\frac{\varepsilon_t}{\eta_f})\sigma_t(\boldsymbol{\theta_0})^2\boldsymbol{\rho_t}
(\boldsymbol{\theta_0})'(\boldsymbol{\theta}-\boldsymbol{\theta_0})+
\parallel\boldsymbol{\theta}-\boldsymbol{\theta_0}\parallel^2\boldsymbol{R_{3t}}(\boldsymbol{\theta})
\end{eqnarray}
where $\boldsymbol{R_{3t}}(\boldsymbol{\theta})$ is an $r$-vector.

It has been shown in \cite{HY2003} that for $\boldsymbol{R_t}(\boldsymbol{\theta})=\boldsymbol{R_{1t}}(\boldsymbol{\theta})$,
$\boldsymbol{R_{2t}}(\boldsymbol{\theta})$ and $\boldsymbol{R_{3t}}(\boldsymbol{\theta})$, component-wise,
\begin{eqnarray}
P(T^{-1}\sum_{t=1}^T\sup_{|\boldsymbol{\theta}-\boldsymbol{\theta_0}|\leq \xi }|\boldsymbol{R_t}(\boldsymbol{\theta})|\leq C)\longrightarrow 1
\end{eqnarray}
with $\xi$ sufficiently small.
Therefore, we can rewrite the equation (\ref{samplescore}) as
\begin{eqnarray}
&&\boldsymbol{0}=\sum_{t=1}^T(1+h(\frac{\varepsilon_t}{\eta_f}))\sigma_t(\boldsymbol{\theta_0})^2\boldsymbol{\rho_t}(\boldsymbol{\theta_0})+
\sum_{t=1}^T\big(-\frac{\varepsilon_t}{\eta_f}\dot{h}(\frac{\varepsilon_t}{\eta_f})\sigma_t(\boldsymbol{\theta_0})^4\boldsymbol{\rho_t}(\boldsymbol{\theta_0})'
\boldsymbol{\rho_t}(\boldsymbol{\theta_0})\nonumber\\&&+(1+h(\frac{\varepsilon_t}{\eta_f}))
(\boldsymbol{A_t}(\boldsymbol{\theta_0})'\boldsymbol{\rho_t}(\boldsymbol{\theta_0})+\sigma_t(\boldsymbol{\theta_0})^2\boldsymbol{B_t}(\boldsymbol{\theta_0}))
\big)(\boldsymbol{\theta}-\boldsymbol{\theta_0})+\parallel\boldsymbol{\theta}-\boldsymbol{\theta_0}\parallel^2T\boldsymbol{R}(\boldsymbol{\theta})\nonumber
\end{eqnarray}
where
\begin{eqnarray}
P(\sup_{|\boldsymbol{\theta}-\boldsymbol{\theta_0}|\leq \xi }|\boldsymbol{R}(\boldsymbol{\theta})|\leq C)\longrightarrow 1
\end{eqnarray}
Note that
\begin{eqnarray}
&&E\Big((1+h(\frac{\varepsilon_t}{\eta_f}))(\boldsymbol{A_t}(\boldsymbol{\theta_0})'\boldsymbol{\rho_t}(\boldsymbol{\theta_0})+\sigma_t(\boldsymbol{\theta_0})^2
\boldsymbol{B_t}(\boldsymbol{\theta_0}))\Big)\nonumber\\
&=&E\Big((\boldsymbol{A_t}(\boldsymbol{\theta_0})'\boldsymbol{\rho_t}(\boldsymbol{\theta_0})+\sigma_t(\boldsymbol{\theta_0})^2\boldsymbol{B_t}(\boldsymbol{\theta_0}))
E_t\Big(1+h(\frac{\varepsilon_t}{\eta_f})\Big)\Big)\nonumber\\
&=&E\Big((\boldsymbol{A_t}(\boldsymbol{\theta_0})'\boldsymbol{\rho_t}(\boldsymbol{\theta_0})+\sigma_t(\boldsymbol{\theta_0})^2\boldsymbol{B_t}(\boldsymbol{\theta_0}
))\Big)E\Big(1+h(\frac{\varepsilon_t}{\eta_f})\Big)\nonumber\\
&=&\boldsymbol{0}
\end{eqnarray}
Therefore, it may be proved from the ergodic theorem that
\begin{eqnarray}
&&T^{-1}\sum_{t=1}^T(1+h(\frac{\varepsilon_t}{\eta_f})(\boldsymbol{A_t}(\boldsymbol{\theta_0})'\boldsymbol{\rho_t}(\boldsymbol{\theta_0})+
\sigma_t(\boldsymbol{\theta_0})^2\boldsymbol{B_t}(\boldsymbol{\theta_0}))\longrightarrow \boldsymbol{0}\\
&&T^{-1}\sum_{t=1}^T\frac{\varepsilon_t}{\eta_f}\dot{h}(\frac{\varepsilon_t}{\eta_f})\sigma_t(\boldsymbol{\theta_0})^4\boldsymbol{\rho_t}(\boldsymbol{\theta_0}
)'\boldsymbol{\rho_t}(\boldsymbol{\theta_0})\longrightarrow
\boldsymbol{M}
E\Big(\frac{\varepsilon_t}{\eta_f}\dot{h}(\frac{\varepsilon_t}{\eta_f})\Big)
\end{eqnarray}
Hence, we have
\begin{eqnarray}
&&(\boldsymbol{M}
E\Big(\frac{\varepsilon_t}{\eta_f}\dot{h}(\frac{\varepsilon_t}{\eta_f})\Big)+o_P(1))(\boldsymbol{\theta}-\boldsymbol{\theta_0})
+\parallel\boldsymbol{\theta}-\boldsymbol{\theta_0}\parallel^2\boldsymbol{R}(\boldsymbol{\theta})\nonumber\\
&=&T^{-1}\sum_{t=1}^T(1+h(\frac{\varepsilon_t}{\eta_f}))\boldsymbol{k_0}
\end{eqnarray}
where $o_p(1)$ does not depend on $\boldsymbol{\theta}$. It may be proved from martingale central limit theorem that
\begin{eqnarray}
T^{-1/2}\sum_{t=1}^T(1+h(\frac{\varepsilon_t}{\eta_f}))\boldsymbol{k_0})\stackrel{\rm
\mathcal{L}}{\longrightarrow}N\Big(0,\boldsymbol{M}
E(1+h(\frac{\varepsilon_t}{\eta_f}))^2\Big)
\end{eqnarray}
then it follows from the same argument as in \cite{HY2003} that
\begin{eqnarray}
\hat{\boldsymbol{\theta}}_{\boldsymbol{T}}-\boldsymbol{\theta_0}=\boldsymbol{O_p(T^{-1/2})}
\end{eqnarray}
and
\begin{eqnarray}
(\boldsymbol{M}
E\Big(\frac{\varepsilon_t}{\eta_f}\dot{h}(\frac{\varepsilon_t}{\eta_f})\Big)+o_P(1))(\boldsymbol{\theta}-\boldsymbol{\theta_0})
=T^{-1}\sum_{t=1}^T(1+h(\frac{\varepsilon_t}{\eta_f}))\boldsymbol{k_0}
\end{eqnarray}
Thus,
\begin{eqnarray}
T^{1/2}(\hat{\boldsymbol{\theta}}_{\boldsymbol{T}}-\boldsymbol{\theta_0})\stackrel{\rm
\mathcal{L}}{\longrightarrow}N\Big(\boldsymbol{0},\boldsymbol{M}^{-1}
\frac{E\Big((1+h(\frac{\varepsilon_t}{\eta_f}))^2\Big)}{\Big(E\Big(\frac{\varepsilon_t}{\eta_f}\dot{h}(\frac{\varepsilon_t}{\eta_f})\Big)\Big)^2}\Big)
\end{eqnarray}
\end{proof}

\subsection{Proof of Proposition \ref{cor:example}}
\begin{proof}
Define the likelihood ratio function
$G(\eta)=E(\log(\frac{\frac{1}{\eta}f(\frac{\varepsilon}{\eta})}{f(\varepsilon)}))$.
Suppose $G(\eta)$ has no local extremal values. And since
$\log(x)\leq2(\sqrt{x}-1)$,
\begin{eqnarray*}
E(\log(\frac{\frac{1}{\eta}f(\frac{\varepsilon}{\eta})}{f(\varepsilon)}))&\leq&2E(\sqrt{\frac{\frac{1}{\eta}f(\frac{\varepsilon}{\eta})}{f(\varepsilon)}}-1)
=2\int_{-\infty}^{+\infty}\sqrt{\frac{1}{\eta}f(\frac{x}{\eta})f(x)}dx-2\\
&\leq&-\int_{-\infty}^{+\infty}(\sqrt{\frac{1}{\eta}f(\frac{x}{\eta})}-\sqrt{f(x)})^2dx\\
&\leq&0
\end{eqnarray*}
The equality holds if and only if $\eta=1$. Therefore, $\eta=1$ is
the unique maximum of $Q(\eta)$.

\end{proof}

\subsection{Proof of Theorem \ref{thm:CLT2}}
In order to show the asymptotic normality, we first list some notations and derive a lemma. For convenience, we denote $\boldsymbol{y_0} = {1 \over v_t(\boldsymbol{\gamma_0})} {\partial v_t(\boldsymbol{\gamma_0})\over \partial \boldsymbol{\gamma}} $ and $\bar {\boldsymbol{y}}_{\boldsymbol{0}} = E(\boldsymbol{y_0})$, so $\boldsymbol{k_0} = ({1\over \sigma_0}, \boldsymbol{y_0}')'$ and $\bar{ \boldsymbol{k}}_{\boldsymbol{0}} = E {\boldsymbol{k_0}} = ({1\over \sigma_0}, \bar{\boldsymbol{ y}}_{\boldsymbol{0}}')'$. Also, let $\boldsymbol{M} = E(\boldsymbol{k_0}\boldsymbol{k_0}')$, $\boldsymbol{N} = \bar{\boldsymbol{ k}}_{\boldsymbol{0}} \bar {\boldsymbol{k}}_{\boldsymbol{0}}'$ and $\boldsymbol{V}= \mbox{Var}(\boldsymbol{y_0})^{-1}$. All the expectations above are taken under the true density $g$.

\begin{Lemma}\label{lem:M^-1}The following claims hold:
\begin{enumerate}
\item The inverse of $\boldsymbol{M}$ in block expression is
\begin{eqnarray}\label{eq:M-1 in lemma}
\boldsymbol{M}^{-1} = \left( \begin{array}{cc}
\sigma_0^2 (1+ \bar{ \boldsymbol{y}}_{\boldsymbol{0}}' \boldsymbol{V} \bar {\boldsymbol{y}}_{\boldsymbol{0}}) & -\sigma_0 \bar {\boldsymbol{y}}_{\boldsymbol{0}}' {\boldsymbol{V}} \\
-\sigma_0 {\boldsymbol{V}}\bar{\boldsymbol{ y}}_{\boldsymbol{0}} &{\boldsymbol{V}}
\end{array}\right );
\end{eqnarray}
\item $\bar{\boldsymbol{k}}_{\boldsymbol{0}}'{\boldsymbol{M}}^{-1}=\sigma_0\boldsymbol{e_1}'$, $\bar{\boldsymbol{ k}}_{\boldsymbol{0}}' {\boldsymbol{M}}^{-1} {\boldsymbol{ k}}_{\boldsymbol{0}} = \bar {\boldsymbol{k}}_{\boldsymbol{0}}' {\boldsymbol{M}}^{-1} \bar {\boldsymbol{k}}_{\boldsymbol{0}} = 1$;
\item $\boldsymbol{M}^{-1}\boldsymbol{N}\boldsymbol{M}^{-1} =\boldsymbol{M}^{-1}\boldsymbol{N}\boldsymbol{M}^{-1}\boldsymbol{N}\boldsymbol{M}^{-1} = \sigma_0^2 \boldsymbol{e_1} \boldsymbol{e_1}'$, where $\boldsymbol{e_1}$ is a unit column vector that has the same length as $\boldsymbol{\theta}$, with the first entry one and all the rest zeros.
\end{enumerate}
\end{Lemma}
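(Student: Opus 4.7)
The three claims share a common engine: a block decomposition of $\boldsymbol{M}$ in which the top-left block is the scalar $1/\sigma_0^2$ and the bottom-right block involves $E(\boldsymbol{y_0}\boldsymbol{y_0}')$. The key preliminary observation is that
\begin{equation*}
E(\boldsymbol{y_0}\boldsymbol{y_0}') \;=\; \mathrm{Var}(\boldsymbol{y_0}) + \bar{\boldsymbol{y}}_{\boldsymbol{0}}\bar{\boldsymbol{y}}_{\boldsymbol{0}}' \;=\; \boldsymbol{V}^{-1} + \bar{\boldsymbol{y}}_{\boldsymbol{0}}\bar{\boldsymbol{y}}_{\boldsymbol{0}}',
\end{equation*}
so that $\boldsymbol{M}$ has the block form with top-left $1/\sigma_0^2$, off-diagonal $\bar{\boldsymbol{y}}_{\boldsymbol{0}}/\sigma_0$ and $\bar{\boldsymbol{y}}_{\boldsymbol{0}}'/\sigma_0$, and bottom-right $\boldsymbol{V}^{-1} + \bar{\boldsymbol{y}}_{\boldsymbol{0}}\bar{\boldsymbol{y}}_{\boldsymbol{0}}'$.

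For claim (1), I would apply the standard Schur-complement inversion formula using the scalar $1/\sigma_0^2$ as the pivot block. The Schur complement of this pivot is exactly $\boldsymbol{V}^{-1} + \bar{\boldsymbol{y}}_{\boldsymbol{0}}\bar{\boldsymbol{y}}_{\boldsymbol{0}}' - \sigma_0^2 (\bar{\boldsymbol{y}}_{\boldsymbol{0}}/\sigma_0)(\bar{\boldsymbol{y}}_{\boldsymbol{0}}'/\sigma_0) = \boldsymbol{V}^{-1}$, so its inverse is $\boldsymbol{V}$. Plugging this into the block-inversion formula immediately yields the expression for $\boldsymbol{M}^{-1}$ in (\ref{eq:M-1 in lemma}). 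The one arithmetic check is that the top-left entry of $\boldsymbol{M}^{-1}$ equals $\sigma_0^2 + \sigma_0^2 \bar{\boldsymbol{y}}_{\boldsymbol{0}}'\boldsymbol{V}\bar{\boldsymbol{y}}_{\boldsymbol{0}}$, which matches.

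For claim (2), with $\boldsymbol{M}^{-1}$ in hand, I would compute $\bar{\boldsymbol{k}}_{\boldsymbol{0}}'\boldsymbol{M}^{-1}$ by multiplying the row $(1/\sigma_0,\,\bar{\boldsymbol{y}}_{\boldsymbol{0}}')$ against the columns of $\boldsymbol{M}^{-1}$. The first entry collapses via $\sigma_0(1+\bar{\boldsymbol{y}}_{\boldsymbol{0}}'\boldsymbol{V}\bar{\boldsymbol{y}}_{\boldsymbol{0}}) - \sigma_0\bar{\boldsymbol{y}}_{\boldsymbol{0}}'\boldsymbol{V}\bar{\boldsymbol{y}}_{\boldsymbol{0}} = \sigma_0$, and the remaining entries collapse via $-\bar{\boldsymbol{y}}_{\boldsymbol{0}}'\boldsymbol{V} + \bar{\boldsymbol{y}}_{\boldsymbol{0}}'\boldsymbol{V} = \boldsymbol{0}'$. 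This gives $\bar{\boldsymbol{k}}_{\boldsymbol{0}}'\boldsymbol{M}^{-1} = \sigma_0\boldsymbol{e_1}'$. The two scalar identities are then immediate: $\bar{\boldsymbol{k}}_{\boldsymbol{0}}'\boldsymbol{M}^{-1}\boldsymbol{k_0} = \sigma_0\boldsymbol{e_1}'\boldsymbol{k_0}$, and since $\boldsymbol{e_1}'\boldsymbol{k_0} = 1/\sigma_0$, the product equals $1$; the same argument with $\bar{\boldsymbol{k}}_{\boldsymbol{0}}$ in place of $\boldsymbol{k_0}$ gives the second identity.

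Claim (3) then follows from claim (2) by rank-one algebra, since $\boldsymbol{N} = \bar{\boldsymbol{k}}_{\boldsymbol{0}}\bar{\boldsymbol{k}}_{\boldsymbol{0}}'$. Writing $\boldsymbol{M}^{-1}\boldsymbol{N}\boldsymbol{M}^{-1} = (\boldsymbol{M}^{-1}\bar{\boldsymbol{k}}_{\boldsymbol{0}})(\bar{\boldsymbol{k}}_{\boldsymbol{0}}'\boldsymbol{M}^{-1})$ and applying claim (2) on both sides yields $(\sigma_0\boldsymbol{e_1})(\sigma_0\boldsymbol{e_1}') = \sigma_0^2\boldsymbol{e_1}\boldsymbol{e_1}'$. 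For the second equality, I would multiply one more copy of $\boldsymbol{N}\boldsymbol{M}^{-1}$ on the right and reuse the identity $\bar{\boldsymbol{k}}_{\boldsymbol{0}}'\boldsymbol{M}^{-1}\bar{\boldsymbol{k}}_{\boldsymbol{0}} = 1$ from claim (2) to absorb the extra factor. No step is a real obstacle; the main thing to watch is simply to exploit the factorization $E(\boldsymbol{y_0}\boldsymbol{y_0}') = \boldsymbol{V}^{-1} + \bar{\boldsymbol{y}}_{\boldsymbol{0}}\bar{\boldsymbol{y}}_{\boldsymbol{0}}'$ at the outset, which is what makes the Schur complement clean and causes all the subsequent cancellations.
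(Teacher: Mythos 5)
Your proof is correct, but it travels a genuinely different route from the paper's. The paper writes $\boldsymbol{M}$ as a singular block-diagonal matrix (zero in the top-left corner, $\mbox{Var}(\boldsymbol{y_0})$ in the bottom-right) plus the rank-one term $\bar{\boldsymbol{k}}_{\boldsymbol{0}}\bar{\boldsymbol{k}}_{\boldsymbol{0}}'$, and then invokes a Moore--Penrose pseudo-inverse update formula from Ben-Israel and Greville, which requires instantiating several auxiliary quantities ($\beta$, $\boldsymbol{w}$, $\boldsymbol{m}$, $\boldsymbol{v}$, $\boldsymbol{n}$) to assemble the correction matrix $\boldsymbol{H}$. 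You instead exploit the same underlying identity $E(\boldsymbol{y_0}\boldsymbol{y_0}') = \boldsymbol{V}^{-1} + \bar{\boldsymbol{y}}_{\boldsymbol{0}}\bar{\boldsymbol{y}}_{\boldsymbol{0}}'$ but feed it into the standard Schur-complement block-inversion formula with the nonzero scalar $1/\sigma_0^2$ as pivot; the rank-one piece cancels exactly in the Schur complement, leaving $\boldsymbol{V}^{-1}$, and the four blocks of $\boldsymbol{M}^{-1}$ drop out. Your route is more elementary and self-contained (no appeal to generalized-inverse theory), at the cost of nothing beyond the invertibility of $\mbox{Var}(\boldsymbol{y_0})$, which both arguments need for $\boldsymbol{V}$ to exist. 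For claims (2) and (3) the paper merely says they follow by ``simple matrix manipulation''; your explicit computations --- the row-vector collapse giving $\bar{\boldsymbol{k}}_{\boldsymbol{0}}'\boldsymbol{M}^{-1}=\sigma_0\boldsymbol{e_1}'$, the evaluation $\boldsymbol{e_1}'\boldsymbol{k_0}=1/\sigma_0$, and the rank-one factorization of $\boldsymbol{N}$ together with $\bar{\boldsymbol{k}}_{\boldsymbol{0}}'\boldsymbol{M}^{-1}\bar{\boldsymbol{k}}_{\boldsymbol{0}}=1$ to absorb the middle factor --- are exactly the intended manipulations and are all correct.
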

\begin{proof}
The proof uses Moore-Penrose pseudo inverse described in \cite{Adi2003}. Observe that
\begin{eqnarray}
\boldsymbol{M} = \left( \begin{array}{cc}
0 & \boldsymbol{0} \\
\boldsymbol{0} & \mbox{Var}(\boldsymbol{y_0})
\end{array} \right) + \bar {\boldsymbol{k}}_{\boldsymbol{0}} \bar {\boldsymbol{k}}_{\boldsymbol{0}}'.
\end{eqnarray}
Use the technique of Moore-Penrose pseudo inverse,
\begin{eqnarray}\label{eq:M-1 in proof_lemma}
\boldsymbol{M}^{-1} = \left( \begin{array}{cc}
0 & \boldsymbol{0} \\
\boldsymbol{0} & \mbox{Var}(\boldsymbol{y_0})
\end{array} \right)^+ + \boldsymbol{H} = \left( \begin{array}{cc}
0 & \boldsymbol{0}\\
\boldsymbol{0} & \boldsymbol{V}
\end{array}\right) + \boldsymbol{H}
\end{eqnarray}
where $H$ is formed by the elements below:
\begin{eqnarray*}
\beta &=& 1 + \bar {\boldsymbol{y}}_{\boldsymbol{0}}' \boldsymbol{V} \bar {\boldsymbol{y}}_{\boldsymbol{0}} \\
\boldsymbol{w} &=& ({\sigma^{-1}_0}, \boldsymbol{0})' \\
\boldsymbol{m} &=&\boldsymbol{ w} \\
\boldsymbol{v} &=& (0, \bar {\boldsymbol{y}}_{\boldsymbol{0}}' \boldsymbol{V})' \\
\boldsymbol{n} &=&\boldsymbol{ v}
\end{eqnarray*}
\begin{eqnarray*}
\boldsymbol{ H}&=& - {1\over \| \boldsymbol{w} \|^2} \boldsymbol{v} \boldsymbol{w}' - {1\over \|\boldsymbol{m}\|^2} \boldsymbol{m} \boldsymbol{n}' + {\beta \over \|\boldsymbol{w}\|^2 \|\boldsymbol{m}\|^2} \boldsymbol{m}\boldsymbol{ w}'\\
&=& \sigma_0^2 \left( \begin{array}{cc}
1+\bar{\boldsymbol{ y}}_{\boldsymbol{0}}' \boldsymbol{V} \bar {\boldsymbol{y}}_{\boldsymbol{0}} & -{\sigma^{-1}_0} \bar{\boldsymbol{ y}}_{\boldsymbol{0}}' {\boldsymbol{V}} \\
-{\sigma^{-1}_0}{\boldsymbol{V}}\bar {\boldsymbol{y}}_{\boldsymbol{0}} & \boldsymbol{0}
\end{array}\right)
\end{eqnarray*}
So (\ref{eq:M-1 in lemma}) is obtained by plugging $\boldsymbol{H}$ into (\ref{eq:M-1 in proof_lemma}). The rest two points of the lemma can be obtained by simple matrix manipulation.
\end{proof}

Next we return to the proof of Theorem \ref{thm:CLT2}.

\begin{proof}
According to Theorem 3.4 in \cite{NeweyMcFadden}, $(\tilde {\boldsymbol{\theta}}_{\boldsymbol{T}}, \hat \eta, \hat{\boldsymbol{ \theta}}_{\boldsymbol{T}})$ are jointly $T^{1\over 2}$-consistent and asymptotic normal. The asymptotic variance matrix is
\begin{eqnarray}
\boldsymbol{G}^{-1}E(\tilde{ \boldsymbol{s}}(\boldsymbol{\theta_0}, \eta_f, \boldsymbol{\theta_0}) \tilde {\boldsymbol{s}}(\boldsymbol{\theta_0}, \eta_f, \boldsymbol{\theta_0})')\boldsymbol{G}'^{-1}.
\end{eqnarray}
where $\boldsymbol{G} = E(\nabla \tilde{\boldsymbol{ s}}(x_t,\boldsymbol{\theta_0}, \eta_f, \boldsymbol{\theta_0}))$.
View this matrix as $3\times 3$ blocks, with asymptotic variances of $(\tilde {\boldsymbol{\theta}}_{\boldsymbol{T}}, \hat \eta, \boldsymbol{\hat \theta}_{\boldsymbol{T}})$ on the first, second and third diagonal blocks. We now calculate the second and the third diagonal blocks. The expect Jacobian matrix $G $ can be decomposed into
\begin{eqnarray*}
\boldsymbol{G} =
E \left(
\begin{array} {ccc}
\nabla_{\boldsymbol{\theta}}\boldsymbol{s_1}(x_t,\boldsymbol{\theta_0}) & \boldsymbol{0} & \boldsymbol{0} \\
\nabla_{\boldsymbol{\theta}}s_2(x_t,\boldsymbol{\theta_0}, \eta_f) & \nabla_{\eta} s_2 (\boldsymbol{\theta_0}, \eta_f) &\boldsymbol{0}\\
\boldsymbol{0} & \nabla_{\eta} \boldsymbol{s_3}(x_t, \eta_f, \boldsymbol{\theta_0}) & \nabla_{\boldsymbol{\phi}} \boldsymbol{s_3}(x_t, \eta_f, \boldsymbol{\theta_0})
\end{array}  \right)
\end{eqnarray*}
Denote the corresponding blocks as $G_{ij}$, $i,j=1,2,3.$ Direct calculation yields
\begin{eqnarray*}
\boldsymbol{G_{11}} &=& -2 \boldsymbol{M} \\
\boldsymbol{G_{21}} &=& {1\over \eta_f} E \Big( \dot{h}({\varepsilon \over \eta_f}) {\varepsilon \over \eta_f} \Big) \bar{\boldsymbol{ k}}_{\boldsymbol{0}}' \\
G_{22} &=& {1\over \eta_f^2} E \Big( \dot{h}({\varepsilon \over \eta_f}) {\varepsilon \over \eta_f} \Big)\\
\boldsymbol{G_{32}} &=& \boldsymbol{G_{21}}' \\
\boldsymbol{G_{33}} &=& E \Big( \dot{h}({\varepsilon \over \eta_f}) {\varepsilon \over \eta_f} \Big) \boldsymbol{M}
\end{eqnarray*}
The second diagonal block depends on the second row of $\boldsymbol{G^{-1}}$ and $\tilde {\boldsymbol{s}}(x_t, \boldsymbol{\theta_0}, \eta_f, \boldsymbol{\theta_0})$. The second row of $\boldsymbol{G}^{-1}$ is
\begin{eqnarray*}
(-G_{22}^{-1}\boldsymbol{G_{21}}\boldsymbol{G_{11}}^{-1} \quad G_{22}^{-1} \quad \boldsymbol{0})
\end{eqnarray*}
So the asymptotic variance of $\hat \eta$ is
$G_{22}^{-1} E(\boldsymbol{q_2} \boldsymbol{q}'_{\boldsymbol{2}}) G_{22}'^{-1}$, where
\begin{eqnarray*}
{q_2}     &=& -\boldsymbol{G_{21}}\boldsymbol{G_{11}}^{-1} \boldsymbol{s_1}(x_t, \boldsymbol{\theta_0}) + s_2(x_t, \boldsymbol{\theta_0}, \eta_f)\\
&=& {1\over 2 \eta_f} E \Big( \dot{h}({\varepsilon \over \eta_f}) {\varepsilon \over \eta_f} \Big) \bar{\boldsymbol{ k}}_{\boldsymbol{0}}' \overline{\boldsymbol{k_0}\boldsymbol{k_0}'}^{-1} (\varepsilon^2 -1)\boldsymbol{k_0} - {1\over \eta_f} \Big( 1+ h({\varepsilon \over \eta_f}) \Big)\\
&=& {1\over \eta_f} \Big({ 1\over 2} {Eh_2(\varepsilon^2 -1)} - h_1 \Big)
\end{eqnarray*}
The last step uses the second point of Lemma \ref{lem:M^-1}. So (\ref{eq:avar_eta}) is obtained by plugging in the expressions for $G_{22}$ and ${q_2}$. Similarly, the third row of $\boldsymbol{G^{-1}}$ is
\begin{eqnarray*}
\boldsymbol{G_{33}}^{-1}(\boldsymbol{G_{32}}G_{22}^{-1}\boldsymbol{G_{21}}\boldsymbol{G_{11}}^{-1} \quad -\boldsymbol{G_{32}}G_{22}^{-1} \quad \boldsymbol{I})
\end{eqnarray*}
The asymptotic variance for $\hat {\boldsymbol{\theta}}$ is
$ \boldsymbol{G_{33}}^{-1} E(\boldsymbol{q_3} \boldsymbol{q}'_{\boldsymbol{3}}) \boldsymbol{G_{33}}'^{-1}$,
where
\begin{eqnarray*}
\boldsymbol{q_3} &=&\boldsymbol{ G_{32}}G_{22}^{-1} (\boldsymbol{G_{21}}\boldsymbol{G_{11}}^{-1}\boldsymbol{s_1}(x_t, \boldsymbol{\theta_0}) - \boldsymbol{s_2}(x_t, \boldsymbol{\theta_0}, \eta_f) ) + \boldsymbol{s_3}(x_t, \eta_f, \boldsymbol{\theta_0}) \\
&=& -(1+h({\varepsilon \over \eta_f}))(\boldsymbol{k_0}-\bar{\boldsymbol{ k}}_{\boldsymbol{0}}) - {1\over2} E(\dot{h}({\varepsilon \over \eta_f}){\varepsilon \over \eta_f})\bar{\boldsymbol{ k}}_{\boldsymbol{0}} \bar {\boldsymbol{k}}_{\boldsymbol{0}}' (\overline{{\boldsymbol{k_0}}\boldsymbol{k_0}'})^{-1}\boldsymbol{k_0} (\varepsilon^2-1)\\
&=& -h_1(\boldsymbol{k_0}-\bar {\boldsymbol{k}}_{\boldsymbol{0}}) - {1\over 2} (Eh_2) (\varepsilon^2-1)\bar{\boldsymbol{ k}}_{\boldsymbol{0}}
\end{eqnarray*}
The last step uses the second point of Lemma \ref{lem:M^-1}. Then
\begin{eqnarray*}
E\boldsymbol{q_3}\boldsymbol{q_3}' &=& Eh_1^2 (\boldsymbol{M-N}) + {1\over 4}(Eh_2)^2 E(\varepsilon^2-1)\boldsymbol{N} \\
&=& Eh_1^2 \boldsymbol{M} + \Big({1\over 4}E(\varepsilon^2-1)^2 - Eh_1^2 \Big)\boldsymbol{N}
\end{eqnarray*}
Therefore, (\ref{eq:avar_hat theta}) is obtained by plugging in the expressions for $\boldsymbol{G_{33}}$, $E\boldsymbol{q_3}\boldsymbol{q_3}'$, and apply the third point of Lemma \ref{lem:M^-1}.

The asymptotic covariance between $\hat {\boldsymbol{\theta}}$ and $\hat \eta_f$ is $\boldsymbol{G_{33}}^{-1}E(\boldsymbol{q_3}{ q}_{{2}})G_{22}'^{-1}$, then direct calculation using the second point of Lemma \ref{lem:M^-1} yields
\begin{eqnarray*}
\boldsymbol{\Pi}={\eta_f\sigma_0 \over 2}E\Big((\varepsilon^2-1)({h_1\over Eh_2}-{\varepsilon^2-1\over 2})\Big)\boldsymbol{e'_1}
\end{eqnarray*}
The same formula recurs in the asymptotic covariance between $\tilde {\boldsymbol{\theta}}$ and $\hat \eta_f$, which is $\boldsymbol{G_{11}}^{-1}E(\boldsymbol{q_1}{ q}_{{2}})G_{22}'^{-1}$.

Finally, the asymptotic covariance between $\tilde {\boldsymbol{\theta}}$ and $\hat {\boldsymbol{\theta}}$ is $\boldsymbol{G_{11}}^{-1}E(\boldsymbol{q_1}{\boldsymbol q}'_{\boldsymbol{3}})\boldsymbol{G_{33}}'^{-1}$, denoted as $\Xi$. If implies from the third point of Lemma \ref{lem:M^-1} that
\begin{eqnarray*}
\boldsymbol{\Xi}=\frac{E(h_1(\varepsilon^2-1))}{2E(h_2)}\boldsymbol{M^{-1}}-\frac{\sigma^2_0}{2}E\Big((\varepsilon^2-1)({h_1\over Eh_2}-{\varepsilon^2-1\over 2})\Big)\boldsymbol{e_1}\boldsymbol{e'_1}
\end{eqnarray*}

which concludes the proof.

\end{proof}

\subsection{Proof of Theorem \ref{thm:CLT3}}
\begin{proof}
Following the similar idea to GMM, we may prove:
\begin{eqnarray*}
\left(
           \begin{array}{ccc}
             \boldsymbol{I} & \boldsymbol{0} &\boldsymbol{ 0} \\
             \lambda_T T^{-\frac{1}{2}}\boldsymbol{G_{21}} & G_{22} & \boldsymbol{0} \\
            \boldsymbol{0}  & \boldsymbol{G_{32}} & \boldsymbol{G_{33}} \\
           \end{array}
         \right)
 \left(
\begin{array}{c}
T\lambda^{-1}_T(\tilde{\boldsymbol{\theta}}-\boldsymbol{\theta_0}) \\
T^{\frac{1}{2}}(\hat{\eta}_f-\eta_f) \\
  T^{\frac{1}{2}}(\hat{\boldsymbol{\theta}}-\boldsymbol{\theta_0})
\end{array}\right)
=\left(\begin{array}{c}
\frac{1}{\lambda_T}\sum_{t=1}^T\boldsymbol{\Psi_t}(\varepsilon_t)+\boldsymbol{o_P(1)} \\
   -\frac{1}{\sqrt{T}}\sum_{t=1}^T\frac{1}{\eta_f}(1+h(\frac{\varepsilon_t}{\eta_f}))+o_P(1) \\
   \frac{1}{\sqrt{T}}\sum_{t=1}^T (1+h(\frac{\varepsilon_t}{\eta_f}))\boldsymbol{k_0}+\boldsymbol{o_P(1)}
 \end{array}\right)
\end{eqnarray*}

Clearly, the corresponding weighting vector for $\sqrt{T}(\hat{\boldsymbol{\theta}}_{\boldsymbol{T}}-\boldsymbol{\theta_0})$ is
\begin{eqnarray*}\left(
 \begin{array}{ccc}
\boldsymbol{G_{33}}^{-1}\boldsymbol{G_{32}}G_{22}^{-1}\lambda_T T^{-\frac{1}{2}}\boldsymbol{G_{21}} & -\boldsymbol{G_{33}}^{-1}\boldsymbol{G_{32}}
G_{22}^{-1} & \boldsymbol{G_{33}}^{-1}\\
\end{array}\right)
\end{eqnarray*}
Note that
\begin{eqnarray*}
&&\boldsymbol{G_{33}}^{-1}\boldsymbol{G_{32}}G_{22}^{-1}\lambda_T T^{-\frac{1}{2}}\boldsymbol{G_{21}}=\lambda_T T^{-\frac{1}{2}} \boldsymbol{M}^{-1}\bar{\boldsymbol{k}}_{\boldsymbol{0}}\bar{\boldsymbol{k}}'_{\boldsymbol{0}}=\lambda_T T^{-\frac{1}{2}}\sigma_0\boldsymbol{e_1}\bar{{\boldsymbol{k}}}'_{\boldsymbol{0}}\\
&&-\boldsymbol{G_{33}}^{-1}\boldsymbol{G_{32}}G_{22}^{-1}=-\sigma_0\eta_f( E \Big( \dot{h}({\varepsilon \over \eta_f}) {\varepsilon \over \eta_f} \Big))^{-1}\boldsymbol{e_1}
\end{eqnarray*}
thus the sub-matrices corresponding to $\boldsymbol{\gamma}$ parameter are $\boldsymbol{0}$s. Therefore, the first step has no effect on the central limit theorem of $\hat{\boldsymbol{\gamma}}_{\boldsymbol{T}}$. The result follows from Lemma \ref{lem:M^-1}. In terms of $\hat\sigma_T$, its convergence rate becomes $T\lambda_T^{-1}$.
\end{proof}

\subsection{Proof of Proposition \ref{Prop:aggregation}}
\begin{proof}
 Denote random variables $\kappa_G = (1-\ve^2)/2$, and $\kappa_2 = h_1(\ve/\eta_f)/E(h_2(\ve/\eta_f))$. We show the optimal weights for $\sigma$ and $\Bgamma$ are the same. From Lemma \ref{lem:M^-1}, Theorem \ref{thm:CLT2} and (\ref{eq:aggre_weights}), for $\sigma$, the numerator in $w^*_1$ is
 \begin{eqnarray*}
 (\BSigma_{\BG})_{1,1} - \BXi_{1,1} &=& \sigma_0^2 (1+\bar \By_0' \BV \bar \By_0)E \kappa_G^2 - \sigma_0^2 E\kappa_G^2 + \sigma_0^2 \bar \By_0' \BV \bar \By_0 E(\kappa_G\kappa_2) \\
 &=& \sigma_0^2 \bar \By_0' \BV \bar \By_0 E(\kappa_G (\kappa_G + \kappa_2))
 \end{eqnarray*}
 The denominator in $w^*_1$ is
 \begin{eqnarray*}
 (\BSigma_{\BG})_{1,1} +  (\BSigma_{\boldsymbol{2}})_{1,1}- 2 \BXi_{1,1} &=& \sigma_0^2 (1+\bar \By_0' \BV \bar \By_0)( E\kappa_G^2 + E\kappa_2^2) + \sigma_0^2 (E\kappa_G^2 - E\kappa_2^2) \\
   && - 2 \sigma_0^2 E\kappa_G^2 + 2\sigma_0^2 \bar \By_0' \BV \bar \By_0 E(\kappa_G \kappa_2)\\
   &=& \sigma_0^2 \bar \By_0' \BV \bar \By_0 E(\kappa_G^2  + \kappa_2^2 + 2\kappa_G \kappa_2)
   \end{eqnarray*}
 Therefore we obtain $w^*_1 = E(\kappa_G(\kappa_G + \kappa_2))/E(\kappa_G + \kappa_2)^2$. Now we compute the weights corresponding to $\Bgamma$. For $i = 2,\ldots,1+p+q$, let $j=i-1$, also from (\ref{eq:aggre_weights}),
 \begin{eqnarray*}
 w_i^* = {\BV_{j,j}E\kappa_G^2 + \BV_{j,j}E(\kappa_G \kappa_2)   \over \BV_{j,j}E\kappa_G^2 + \BV_{j,j}E\kappa_2^2  2\BV_{j,j}E(\kappa_G \kappa_2) } =
 { E(\kappa_G(\kappa_G + \kappa_2))  \over E(\kappa_G + \kappa_2)^2}
 \end{eqnarray*}
 Therefore all the optimal aggregation weights are the same.
\end{proof}

\bibliographystyle{ims}

\end{document}